\def\defeq{\stackrel{\triangle}{=}}
\def\prob{\mathbb{P}}
\def\hii{h_{ii}}
\def\hij{h_{ij}}
\def\hri{h_{ri}}
\def\hrj{h_{rj}}
\def\hir{h_{ir}}
\def\nj{n_j}
\def\xonei{x_{1i}}
\def\yonei{y_{1i}}
\def\none1{n_{1i}}
\def\noner{n_{1r}}
\def\xtwoi{x_{2i}}
\def\ytwoi{y_{2i}}
\def\ntwoi{n_{2i}}
\def\yoner{y_{1r}}
\title{Opportunistic Wireless Relay Networks: Diversity-Multiplexing Tradeoff}
\author{Mohamed Abouelseoud, {\em Student Member, IEEE} and Aria
  Nosratinia, {\em Fellow, IEEE}\thanks{The authors are with the
    Department of Electrical Engineering, The University of Texas at
    Dallas, Richardson, TX 75080 USA, E-mail:
    m.abolsoud@student.utdallas.edu, aria@utdallas.edu. This work was
    supported in part by the National Science Foundation under grant
    CNS-0435429 and by the THECB under grant 009741-0084-2007.}}
\begin{document}

\maketitle

\vspace*{-0.2in}

\begin{abstract}

This work studies several relay networks whose {\em
opportunistic} diversity-multiplexing tradeoff (DMT) has been
unknown. 
Opportunistic analysis has traditionally relied on independence
assumptions that break down in many interesting and useful network
topologies
. This paper develops
techniques that expand opportunistic analysis to a broader class of
networks, proposes new opportunistic methods for several network
geometries, and analyzes them in the high-SNR regime.
For each of the geometries studied in the paper, we analyze the
opportunistic DMT of several relay protocols, including
amplify-and-forward, decode-and-forward, compress-and-forward,
non-orthogonal amplify-forward, and dynamic decode-forward.
Among the highlights of the results: in a variety of multi-user
single-relay networks, simple selection strategies are developed and
shown to be DMT-optimal. It is shown that compress-forward relaying
achieves the DMT upper bound in the opportunistic multiple-access
relay channel as well as in the opportunistic $n\times n$ user network
with relay.  Other protocols, e.g. dynamic decode-forward, are shown
to be near optimal in several cases. 
Finite-precision feedback is analyzed for the
opportunistic multiple-access relay channel, the opportunistic
broadcast relay channel, and the opportunistic gateway channel, and is
shown to be almost as good as full channel state information.
\end{abstract}

\begin{keywords}
Cooperative communication, diversity multiplexing trade-off,
opportunistic communication, relay networks.
\end{keywords}


\section{introduction}
Opportunistic communication is a method that at each time chooses the
best among multiple communication alternatives in a network. Multiuser
diversity~\cite{Knopp1995} is a prominent example: in multiple-access
channels under quasi-static fading, it is throughput-optimal to allow
the user with the best channel to transmit at each time, while all
other users remain silent.  Relay selection is another example of
opportunistic communication. An early analysis of relay selection
without transmit-side channel state information appeared
in~\cite{Hunter2004}.  Bletsas et
al~\cite{Bletsas2006,Bletsas2007a,Bletsas2007b} investigated
amplify-and-forward (AF) relay selection, followed by several other
works
including~\cite{Jing2009,Krikidis2008,Krikidis2009,Vaze2009}. Decode-and-forward
(DF) relay selection has also received
attention~\cite{Nosratinia2007,Fareed2009,Beres2008b,Lee2009a,Oechtering2008,Yi2008,Hwang2008}.
The diversity multiplexing tradeoff (DMT) for relay selection has been
investigated in a few works including~\cite{Tannious2008} for
addressing the multiplexing loss of DF relaying, and~\cite{Vaze2009}
for a combination of antenna selection and AF relay selection.

The literature on opportunistic relays, despite its rapid
growth, has focused on a relatively restricted set of conditions.
Broadly speaking, the scope of previous work has been on geometries
and protocols where node selection can be reduced to scalar
comparisons of statistically independent link gains (or simple scalar
functions thereof). For example, Decode-Forward (DF) relay selection
compares the relay-destination links of relays that have decoded the
source message. In the case of amplify-forward (AF) relaying, the
end-to-end SNR (or a proxy, e.g. in~\cite{Bletsas2006}) is used to
select relays, which is again a scalar comparison among independent
random variables.

This leaves open a significant set of problems for whose analysis the
existing approaches are insufficient. Among them one may name even
seemingly simple problems, e.g. the DMT of the orthogonal relay on/off
problem in the single-relay channel, which has been unsolved until now
(see Section~\ref{sec:SingleRelay}).

To shed light on the key difficulties, consider the example of the
opportunistic multiple-access relay channel
(Figure~\ref{fig:MARC1}). Two users transmit messages to a common
receiver with the assistance of a relay. During each transmission
interval either User~1 transmits and User~2 is silent, or vice
versa. The goal is to opportunistically choose the user that can
access the channel at a higher rate.  The main challenge in the
analysis of this system is twofold:

\begin{figure}
\begin{center}
\includegraphics{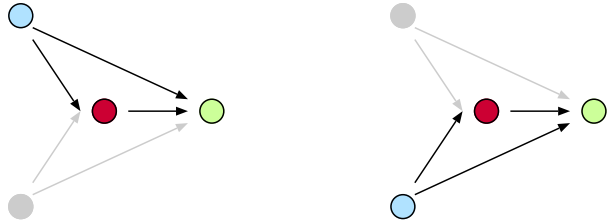}
\end{center}
\caption{The opportunistic modes in the multiple-access relay channel.}
\label{fig:MARC1}
\end{figure}

\begin{enumerate}
\item
The selection is a complex function of multiple link gains, i.e., it
is not immediately clear how to select the ``better'' node in an easy
and straight forward way. Not only do all the five link gains
participate in this decision, but also the capacity of the component
relay networks is generally unknown, and even the achievable rates are
only known as expressions that involve nontrivial
optimizations. Because the performance analysis must take into account
the selection function, the complexity of analysis can quickly get out
of hand with increasing number of nodes.
\item
The relay-destination link is shared among the two opportunistic
modes, therefore the decision variables for the two modes are not
statistically independent. The order statistics of dependent random
variables are complicated and often not computable in closed form.
\end{enumerate}

One of the contributions of this work is 
to address or circumvent the above mentioned difficulties.  This paper
analyzes the diversity and multiplexing gain of a variety of
opportunistic relay systems whose asymptotic high-SNR performance has
to date been unknown. All networks in this paper have one relay. Among
the network geometries that have been studied are the opportunistic
multiple-access and broadcast relay channels and several variations of
the opportunistic $n\times n$ user network with a relay. In the
$n\times n$ network with a relay, if nodes communicate pairwise while
crosslink gains cannot be ignored, the links and communication
structure resemble an interference channel with a relay, therefore we
call it an {\em opportunistic interference relay
  channel.}\footnote{The naming is for convenience purposes and only
  reflects the presence of links not the operation of the network. In
  opportunistic operation there is no interference among users.}  When
the crosslink gains can be ignored, we denote it the {\em
  opportunistic shared relay channel.}  Finally, if all transmitters
have data for all receivers, we denote the scenario as {\em
  opportunistic X-relay channel.} The {\em gateway channel} represents
a scenario where the only path between sources and destination is
through a relay.
%
%
To summarize, the main results of this paper are as follows:
\begin{itemize}
\item
To begin with, the DMT of the opportunistic single-relay on/off
problem is calculated under DF and AF. This simple result can be used
as a building block for the study of larger networks.

\item The diversity-multiplexing tradeoff of the opportunistic
  interference relay channel is calculated under orthogonal AF and DF,
  as well as non-orthogonal amplify and forward (NAF), dynamic decode
  and forward (DDF), and non-orthogonal compress and forward (CF). The
  nonorthogonal CF is shown to achieve the DMT upper bound.

\item For the shared relay channel, an upper bound for the DMT under
  opportunistic channel access is calculated. Furthermore, it is shown
  that for the shared relay channel at low multiplexing gain, the DDF
  outperforms the NAF and CF while at medium multiplexing gains, the
  CF is the best. At high multiplexing gain in the shared relay
  channel the relay should not be used.

\item For the multiple access relay channel, a simple selection scheme
  based on the source-destination link gains is shown to be optimal
  for several protocols.  Specifically, under this simple selection
  mechanism, the CF nonorthogonal relaying is shown to achieve the
  genie-aided DMT upper bound, and the NAF and the DDF also achieve
  their respective DMT upper bounds (i.e., more intricate selection
  schemes do not yield a better DMT).

\item For the X-relay channel, an opportunistic scheme is presented that
  meets the DMT upper bound under the CF protocol. For other
  relaying protocols, the DMT regions are calculated.

\item
  The results for the opportunistic broadcast relay channel follow
  from the opportunistic multiple-access relay channel.

\item For the gateway channel, the superposition as well as the
  orthogonal channel access is studied in the absence of transmit CSI,
  showing that the latter is almost as good as the former. Then, the
  opportunistic channel access is fully characterized.

\item
  Finite precision feedback is investigated for the multiple
  access relay channel (and by implication the broadcast relay channel),
  as well as the gateway channel. The DMT with finite-precision
  feedback for several other relay channels remains an open problem.
\end{itemize}

The organization of the paper is as follows: in
Section~\ref{sec:sys_mod}, we describe the system model. In
Section~\ref{sec:oprt}, the diversity multiplexing tradeoff for an
opportunistic system switches between different access modes is
analyzed. In Section~\ref{sec:SingleRelay}, the problem of a
single-relay opportunistic on/off problem is solved. Then, a
succession of DMT analyses is presented for a number of network
geometries and relaying protocols: in Section~\ref{sec:IRC} for the
interference relay channel, in Section~\ref{sec:SRC} for the shared
relay channel, in Section~\ref{sec:MARC} for the multiple access relay
channel, in Section~\ref{sec:XRC} for the X-relay channel, and in
Section~\ref{sec:GWC} for the gateway channel. We conclude our work in
Section~\ref{sec:conclusion}.


\section{System Model}
\label{sec:sys_mod}

All the nodes in the network are single-antenna and due to practical
limitations, nodes cannot transmit and receive at the same time (half
duplex). The channel between any two nodes experiences flat,
quasi-static block fading whose coefficients are known perfectly at
the receiver. The opportunistic selection mechanism also has access to
channel gains, either in full or quantized. The length of the fading
states (coherence length) is such that the source message is
transmitted and received within one coherence interval. Furthermore,
each transmission can accommodate a codeword of sufficient length so that
standard coding arguments apply.

The various networks considered in this paper may have either multiple
sources, multiple destinations, or both. In all scenarios in this
paper, there is one relay. The channel coefficients between
transmitter $i$ and receiver $j$ is denoted with $h_{ij}$. Channel
gains to or from a relay are shown with $\hir$ or $\hrj$. When the
network has only one source, a symbolic index $s$ is used for it;
similarly if a network has no more than one destination, the index $d$
will be used for it. For example, in a simple relay channel the links
are denoted $h_{sr},h_{rd},h_{sd}$. Channel gains are assumed
independent identically distributed circularly symmetric complex
Gaussian random variables.  The received signals are corrupted by
additive white Gaussian noise (AWGN) which is $n_r\sim{\cal{CN}}(0,N)$
at the relay and $\nj\sim{\cal{CN}}(0,N_j)$ at the
destinations. Without loss of generality, in the following we assume
all noises have unit variance, i.e., $N=N_j=1 \;\; \forall j$. The
transmitter nodes, the sources and the relay, have short-term
individual average power constraints for each transmitted
codeword. The transmit-equivalent signal-to-noise ratio (SNR) is
denoted by $\rho$. Due to the normalization of noise variance, the SNR
$\rho$ also serves as a proxy for transmit power.

In the original definitions of opportunistic communication,
e.g. multi-user diversity, only one transmitter is active during each
transmission interval. For the relay networks considered in this
paper, the definition is slightly generalized in the following manner:

\begin{definition}
Opportunistic communication is defined as a strategy where the
received signal at each node during each transmission interval is
independent of all but one of the transmitted messages. In other
words, during each transmission interval, each receiver in the network
hears only one message stream unencumbered by other message
streams. The target message stream may originate from a source, a
relay, or both.
\end{definition}

This definition maintains the spirit of opportunistic communication
while allowing various non-orthogonal relaying strategies. It is
noteworthy that with this generalized definition, in some networks
(e.g. shared relay channel) more than one message may be in transit at
the same time.

\begin{definition}
An {\em opportunistic communication mode} is the set of active
transmitters, receivers, and respective links in the network during a given
transmission interval.
\end{definition}

This work studies the high-SNR behavior of opportunistic relay
channels via the diversity-multiplexing tradeoff (DMT), in a manner
similar to~\cite{Zheng2003}. Each transmitter $i$ is allocated a
family of codes ${\cal{C}}_i(\rho)$ indexed by the SNR, $\rho$. The
rate $R_i(\rho)$ denotes the data rate in bits per second per hertz
and is a function of the SNR. The multiplexing gain per user $r_i$ is
defined as~\cite{Zheng2003}
\begin{equation}
r_i=\lim_{\rho\rightarrow \infty}\frac{R_i(\rho)}{\log\rho}.
\end{equation}
The selection strategy in the opportunistic relay network yields an
effective end-to-end channel. The attempted rate into this effective
channel is $R_i \approx r_i\log\rho$. The error probability subject to
this rate is denoted $P_e(\rho)$ and the diversity gain is defined as
follows.
\begin{equation}
d=-\lim_{\rho\rightarrow \infty}\frac{\log P_e(\rho)}{\log\rho},
\end{equation}
For the purposes of this study, since the transmission intervals are
sufficiently long, the diversity can be equivalently calculated using
the outage probability.

In principle, the high-SNR study of a network can generate a
multiplicity of diversities and multiplexing gains. In this paper we
pursue the symmetric case, i.e., all opportunistic modes the have the
same diversity gain $d$ (in a manner similar to~\cite{Tse2004}) and
also are required to support the same multiplexing gain $r_i$, where
$r_i=r/n$ and $r$ is the overall (sum) multiplexing gain.

Finally a few points regarding notation: The probability of an event is
denoted with $\prob(\cdot)$.  We say two functions $f(x)$ and $g(x)$
are exponentially equal if
\[\lim_{x\rightarrow\infty}\frac{\log
f(x)}{\log g(x)}=1 \; ,
\]
and denote it with $f(x)\doteq g(x)$. The exponential order of a
random variable $X$ with respect to SNR $\rho$ is defined as
\begin{equation}
  v=-\lim_{\rho\rightarrow\infty}\frac{\log X}{\log\rho},
\end{equation}
and denoted by $X\doteq\rho^{-v}$, $\dot\le$ and $\dot\ge$ follow the
same definition.


\section{Basic Results for DMT  Analysis}
\label{sec:oprt}

Consider an abstraction of a wireless network, shown in
Figure~\ref{fig:switch}, consisting of a set of sources, a set of
destinations, and a number of data-supporting paths between them. Each
of these paths may connect one or more source to one or more
destination, and may consist of active wireless links as well as
(possibly) relay nodes. Recall that the each collection of active paths
and nodes is called an {\em opportunistic mode.} A concrete example of
opportunistic modes was shown in Figure~\ref{fig:MARC1}, where Source~1,
Relay, Destination, and corresponding links make one mode, and Source~2,
Relay, Destination, and corresponding links form the second mode. For
the purposes of this section, the geometry of the links and relays that
compose each mode is abstracted away. However, the DMT supported by each
of the modes\footnote{The multiplexing gain of each mode can be defined
  as the prelog of the overall rate carried by that mode, and similarly
  the diversity defined as the slope of the corresponding aggregate
  error rate of the data.}  is assumed to be known. Furthermore, it is
assumed that only one mode can be active at any given time, i.e., we
select one mode during each transmission interval.
%
\begin{figure}[ht]
\centering
\includegraphics{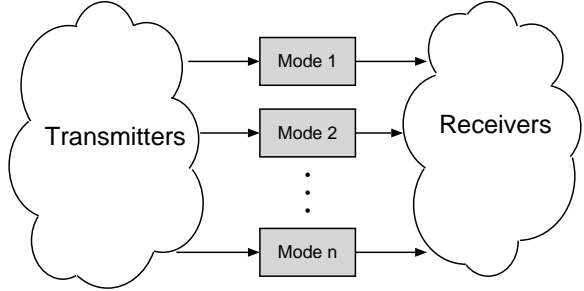}
\caption{General opportunistic wireless scenario model. Each mode
  consists of active links, potentially including a relay.}
\label{fig:switch}
\end{figure}

We now produce a simple but useful result.
\begin{lemma}
\label{lemma:1}
Consider a system that opportunistically switches between $n$ paths
(modes) whose conditional DMTs are given by $d'_i(r)$. The overall DMT is
bounded by:
\begin{equation}
d(r)\le d'_1(r)+d'_2(r)+\ldots+d'_n(r),
\end{equation}
where $d'_i(r)$ is defined as
\begin{equation}
\label{eq:dmt_dep}
d'_i(r)=-\lim_{\rho\rightarrow\infty}\frac{\log \prob(e_i
  |e_{i-1},\ldots,e_1 )}{\log\rho},
\end{equation}
\end{lemma}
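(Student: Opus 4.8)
The plan is to express the overall outage event as a \emph{superset} of the joint (all-modes-in-outage) event and then apply the chain rule together with the definition of the conditional diversities. Let $e_i$ be the outage event of mode $i$ and let $e$ be the overall outage event of the opportunistic system. The crucial observation is the set inclusion $e_1\cap e_2\cap\cdots\cap e_n\subseteq e$: in each interval the system activates exactly one mode, so if \emph{every} mode is simultaneously in outage then the activated mode---whatever the selection rule---is in outage, hence the system is in outage. Consequently $\prob(e)\ge\prob(e_1\cap\cdots\cap e_n)$. This is the inclusion that yields an \emph{upper} bound on diversity: a larger outage probability means a smaller diversity exponent, so lower-bounding $\prob(e)$ is exactly what is required (in contrast to bounding $\prob(e)$ from above, which would give the reverse inequality).

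Next I would factor the joint outage probability by the chain rule of probability,
\[
\prob(e_1\cap\cdots\cap e_n)=\prod_{i=1}^n \prob(e_i\mid e_{i-1},\ldots,e_1),
\]
observing that the ordering and the conditioning here match exactly those used to define $d'_i(r)$ in \eqref{eq:dmt_dep}.

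Then I take exponential orders. Taking logarithms converts the product into a sum and, dividing by $\log\rho$, gives
\[
\frac{\log\prob(e_1\cap\cdots\cap e_n)}{\log\rho}=\sum_{i=1}^n\frac{\log\prob(e_i\mid e_{i-1},\ldots,e_1)}{\log\rho}.
\]
Negating and letting $\rho\to\infty$, and using that each conditional diversity $d'_i(r)$ exists by hypothesis, the limit of the finite sum equals the sum of the limits, so $-\lim_{\rho}\log\prob(e_1\cap\cdots\cap e_n)/\log\rho=\sum_{i=1}^n d'_i(r)$. Since the map $x\mapsto -\log x/\log\rho$ is decreasing on $(0,1]$ for $\rho>1$, the probability bound $\prob(e)\ge\prob(e_1\cap\cdots\cap e_n)$ transfers to the exponents and gives $d(r)=-\lim_{\rho}\log\prob(e)/\log\rho\le\sum_{i=1}^n d'_i(r)$, as claimed.

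The main obstacle is getting the \emph{direction} of the set inclusion right in the first step: because the selection rule is arbitrary and possibly suboptimal, the overall outage event $e$ may strictly contain the joint-outage event (for instance, a good mode existed but was not chosen), yet all that is needed---and all that holds unconditionally---is $\bigcap_i e_i\subseteq e$. The only other point requiring care is the interchange of limit and finite sum, which is harmless here since $n$ is fixed and the $n$ conditional diversities are assumed to exist.
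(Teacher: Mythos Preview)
Your proof is correct and follows essentially the same approach as the paper: both lower-bound the overall error probability by the joint event $\bigcap_i e_i$, factor via the chain rule, and pass to exponential orders. The paper makes the inclusion explicit by writing $P_e=\prob(e_1,e_2)+\prob(U_1,e_1,e_2^c)+\prob(U_2,e_1^c,e_2)\ge\prob(e_1,e_2)$ for the two-mode case, whereas you argue the set inclusion directly for general $n$; the content is the same.
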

and $\prob(e_i |e_{i-1},\ldots,e_1 )$ is the probability of error in
access mode $i$ given that all the previous access modes are in
error.

\begin{proof}
We demonstrate the result for a two-user network, generalization for $n$
users follows directly.

The total probability of error when switching between two subsystems is
\begin{equation}
P_e=\prob(e_1,e_2)+\prob(U_1,e_1,e_2^c)+\prob(U_2,e_1^c,e_2),
\end{equation}
where $e_1$ and $e_2$ are the events of error in decoding User~1 and User~2
data, respectively, the complements of error events are denoted with a
superscript $c$, and $U_1,U_2$ are the events of opportunistically
choosing User~1 and User~2, respectively. The event characterized by
the probabilities $\prob(U_1,e_1,e_2^c)$ and $\prob(U_2,e_1^c,e_2)$
represents the error due to wrong selection. 

We can upper bound $P_e$ as
\begin{align}
  P_e &\ge
  \prob(e_1,e_2)\nonumber\\ &=\prob(e_1)\prob(e_2|e_1)\nonumber\\&\doteq \rho^{-d_1'(r)}\rho^{-d_2'(r)},
\end{align}
which implies that
\begin{equation}
  d(r)\le d'_1(r)+d'_2(r),
\end{equation}

where $d'_i(r)$ is given by Equation~(\ref{eq:dmt_dep}). This
completes the proof.
\end{proof}

Specializing Lemma~\ref{lemma:1} to the case of independent error
probabilities directly yields the following.
\begin{lemma}
\label{lemma:2}
A DMT upper bound for opportunistically switching between $n$
\emph{independent} wireless subsystems is given by $d(r)$ where
\begin{equation}
d(r)\le d_1(r)+d_2(r)+\ldots+d_n(r),
\end{equation}
and $d_i(r)$ is the DMT of the subsystem $i$.
\end{lemma}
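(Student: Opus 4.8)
The plan is to specialize Lemma~\ref{lemma:1} to the case where the $n$ subsystems have statistically independent error events. First I would observe that under independence, the conditional error probabilities collapse to unconditional ones: $\prob(e_i \mid e_{i-1},\ldots,e_1) = \prob(e_i)$ for every $i$. Hence each $d'_i(r)$ in Equation~(\ref{eq:dmt_dep}) reduces to
\[
d'_i(r) = -\lim_{\rho\to\infty}\frac{\log\prob(e_i)}{\log\rho} = d_i(r),
\]
which is by definition the (marginal) DMT of subsystem $i$ — the same quantity one obtains by treating subsystem $i$ in isolation, which is legitimate precisely because the outage/error event of subsystem $i$ depends only on the link gains internal to that subsystem and these are independent across subsystems.

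Next I would simply substitute these equalities into the bound of Lemma~\ref{lemma:1}. That lemma gives $d(r) \le d'_1(r) + d'_2(r) + \cdots + d'_n(r)$, and replacing each $d'_i(r)$ by $d_i(r)$ yields immediately
\[
d(r) \le d_1(r) + d_2(r) + \cdots + d_n(r),
\]
which is the claimed statement. For completeness I would note that the derivation in the proof of Lemma~\ref{lemma:1} goes through verbatim: $P_e \ge \prob(e_1,\ldots,e_n) = \prod_i \prob(e_i) \doteq \prod_i \rho^{-d_i(r)} = \rho^{-\sum_i d_i(r)}$, where the factorization of the joint probability is exactly where independence is used.

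Honestly, there is no real obstacle here — the statement is an immediate corollary and the only thing to be careful about is making the independence hypothesis precise, namely that it is the error (equivalently outage) events, driven by disjoint sets of channel coefficients, that are independent, and that under this hypothesis the ``conditional DMT'' $d'_i$ of Lemma~\ref{lemma:1} coincides with the stand-alone DMT $d_i$ of subsystem $i$. Once that identification is made, the inequality is inherited directly from Lemma~\ref{lemma:1}, so the proof is a one-line specialization.
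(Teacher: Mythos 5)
Your proposal is correct and matches the paper's argument exactly: the paper itself presents Lemma~\ref{lemma:2} as a direct specialization of Lemma~\ref{lemma:1}, noting only that independence collapses each conditional exponent $d'_i(r)$ to the marginal DMT $d_i(r)$. Your additional remark that the factorization $\prob(e_1,\ldots,e_n)=\prod_i\prob(e_i)$ is where independence enters is a fair (if implicit in the paper) clarification, but the route is the same.
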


\begin{lemma}
\label{lemma:3}
The upper bounds of Lemma~\ref{lemma:1} and Lemma~\ref{lemma:2} are
tight if the following two conditions are asymptotically satisfied:

\begin{enumerate}
\itemsep 0pt
\item
Each selected subsystem uses codebooks that achieve its individual
DMT.
\item
The selection criterion is such that the system is in outage only when
all subsystems are in outage, i.e.,
$\prob(U_1,e_1,e_2^c)=\prob(U_2,e_1^c,e_2)=0$.

\end{enumerate}
\end{lemma}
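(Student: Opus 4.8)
The plan is to prove the reverse inequality $d(r)\ge d'_1(r)+\cdots+d'_n(r)$; combined with Lemma~\ref{lemma:1} (and with Lemma~\ref{lemma:2} in the independent case) this yields equality and hence tightness. As in the proof of Lemma~\ref{lemma:1}, I would argue first for $n=2$ and let the general case follow by the same bookkeeping.

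Starting from the exact decomposition
\[
P_e=\prob(e_1,e_2)+\prob(U_1,e_1,e_2^c)+\prob(U_2,e_1^c,e_2),
\]
I would invoke Condition~2 to discard the two ``wrong-selection'' terms (asymptotically negligible by hypothesis), leaving $P_e\doteq\prob(e_1,e_2)=\prob(e_1)\,\prob(e_2\mid e_1)$. Condition~1 is then used to identify each factor: since subsystem~1 employs a codebook achieving its DMT, $\prob(e_1)\doteq\rho^{-d'_1(r)}$, and since subsystem~2 does likewise, $\prob(e_2\mid e_1)\doteq\rho^{-d'_2(r)}$ --- here one relies on the standard high-SNR fact that a DMT-achieving code has error probability exponentially equal to the relevant outage probability, applied (for subsystem~2) to the channel-gain distribution conditioned on $e_1$. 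Multiplying yields $P_e\doteq\rho^{-(d'_1(r)+d'_2(r))}$, i.e. $d(r)\ge d'_1(r)+d'_2(r)$, which matches the upper bound of Lemma~\ref{lemma:1}.

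For general $n$, I would write $P_e=\sum_{k=1}^{n}\prob(U_k,e_k)$ and observe that under Condition~2 the event $\{U_k,e_k\}$ implies $\{e_1,\dots,e_n\}$ (if any subsystem were error-free, the selection rule would not have produced an error), so $P_e\,\dot\le\,\sum_k\prob(e_1,\dots,e_n)\doteq\prob(e_1,\dots,e_n)$; the reverse bound $P_e\,\dot\ge\,\prob(e_1,\dots,e_n)$ is trivial. Then $P_e\doteq\prob(e_1,\dots,e_n)=\prod_{i=1}^{n}\prob(e_i\mid e_{i-1},\dots,e_1)\doteq\prod_{i=1}^{n}\rho^{-d'_i(r)}$, invoking Condition~1 at each factor. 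Specializing to statistically independent subsystems replaces $\prob(e_i\mid e_{i-1},\dots,e_1)$ by $\prob(e_i)\doteq\rho^{-d_i(r)}$ and recovers the tightness claim for Lemma~\ref{lemma:2}.

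The step I expect to be the main obstacle is the rigorous justification of $\prob(e_i\mid e_{i-1},\dots,e_1)\doteq\rho^{-d'_i(r)}$: one must argue that conditioning on the other subsystems' error events merely reshapes the distribution of the channel gains feeding subsystem~$i$, while for each fixed realization the DMT-achieving codebook still drives the physical-layer error probability down to that realization's outage probability --- so that the exponential order is controlled by the conditional outage event, which is exactly what $d'_i(r)$ in (\ref{eq:dmt_dep}) measures. Making this precise needs the usual approximations (conditional error probability $\doteq 1$ inside outage, exponentially small outside) together with a dominating-exponent argument over the conditional gain distribution; the remaining steps are routine manipulation of the error decomposition.
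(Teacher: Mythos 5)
Your proposal is correct and coincides with the argument the paper intends: the paper states Lemma~\ref{lemma:3} without a written proof, treating it as the immediate completion of the decomposition $P_e=\prob(e_1,e_2)+\prob(U_1,e_1,e_2^c)+\prob(U_2,e_1^c,e_2)$ from the proof of Lemma~\ref{lemma:1} --- Condition~2 kills the wrong-selection terms and Condition~1 identifies $\prob(e_i\mid e_{i-1},\ldots,e_1)\doteq\rho^{-d'_i(r)}$, exactly as you do. The technical subtlety you flag (conditional error $\doteq$ conditional outage for DMT-achieving codes) is the same one the paper defers to its appendices (e.g.\ Appendix~\ref{Appen:NAF}) for the specific channels it analyzes.
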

Throughout the remainder of the paper, we assume that appropriate
codebooks are designed and used, therefore the first condition is
satisfied. The second condition would be satisfied by selecting access
modes according to their instantaneous end-to-end mutual
information. For practical reasons, we may consider simpler selection
criteria, in which case the tightness of the bounds above is not
automatically guaranteed. 


\section{Opportunistic On/Off Relay}
\label{sec:SingleRelay}

In this section we consider a simple orthogonal relaying scenario with
one source, one relay and one destination. During each transmission
interval, the source transmits during the first half-interval. In the
second half-interval, either the relay transmits, or the relay remains
silent and the source continues to transmit (see
Figure~\ref{fig:relayma}). The decision between these two options is
made opportunistically based on the channel gains.\footnote{Recall
  that both half-intervals are within the same coherence interval,
  i.e., the entire operation observes one set of channel
  realizations.}

The question is: how should the relay on/off decision be made, and
what is the resulting high-SNR performance (DMT). The apparent
simplicity of the problem can be deceiving, because the random
variables representing the performances of our two choices are not
independent.

\begin{theorem}
\label{theorem:ODMT}
The DMT of a three-node simple relay channel, under either AF or DF,
subject to opportunistic relay selection, is given by:
\begin{equation}
d(r)= (1-r)^+ + (1-2r)^+.
\end{equation}
\end{theorem}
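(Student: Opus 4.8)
The plan is to apply Lemma~\ref{lemma:1} and Lemma~\ref{lemma:3} with the two opportunistic modes being (i) the \emph{relay-off} mode, in which the source occupies the whole transmission interval and faces the point-to-point channel $y_d=h_{sd}x_s+n_d$, and (ii) the \emph{relay-on} mode, which is the two-phase half-duplex AF (resp.\ DF) relay channel on the links $h_{sd},h_{sr},h_{rd}$. By Lemma~\ref{lemma:3} the DMT-optimal selection rule keeps whichever mode has the larger instantaneous end-to-end mutual information, so the system is in outage exactly when \emph{both} modes are; and the argument behind Lemma~\ref{lemma:1} shows that $\prob(\mathcal O_{\mathrm{off}}\cap\mathcal O_{\mathrm{on}})$ is also an exponential lower bound on $P_e$ for \emph{any} selection rule. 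Hence the DMT equals the SNR exponent of $\prob(\mathcal O_{\mathrm{off}}\cap\mathcal O_{\mathrm{on}})$, and the whole proof reduces to evaluating this exponent.

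First I would pass to the exponential orders $u,v,w$ of $|h_{sd}|^2,|h_{sr}|^2,|h_{rd}|^2$; since these gains are i.i.d.\ unit-variance complex Gaussian, $\prob(u>a,\,v>b,\,w>c)\doteq\rho^{-(a^++b^++c^+)}$, and more generally the SNR exponent of any event described through $(u,v,w)$ equals the infimum of $u+v+w$ over the event intersected with the first octant~\cite{Zheng2003}. In these variables: the off-mode carries the full rate $r\log\rho$ over the entire block, so $\mathcal O_{\mathrm{off}}\iff\log(1+\rho^{1-u})<r\log\rho\iff u>1-r$. For DF, the relay-decode and destination constraints give $R_{\mathrm{on}}\doteq\tfrac12\min\{(1-v)^+,\ \max((1-u)^+,(1-w)^+)\}\log\rho$, hence $\mathcal O_{\mathrm{on}}^{\mathrm{DF}}\iff\{v>1-2r\}\cup\big(\{u>1-2r\}\cap\{w>1-2r\}\big)$. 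For AF, substituting the standard end-to-end SNR $\gamma_{\mathrm{AF}}\doteq\rho^{1-u}+\min(\rho^{1-v},\rho^{1-w})$ (the $+1$ power-normalization terms being immaterial at the DMT scale) into $R_{\mathrm{on}}=\tfrac12\log(1+\gamma_{\mathrm{AF}})$ gives $\mathcal O_{\mathrm{on}}^{\mathrm{AF}}\iff\{u>1-2r\}\cap\big(\{v>1-2r\}\cup\{w>1-2r\}\big)$.

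Next I would intersect with $\mathcal O_{\mathrm{off}}$. For $r<\tfrac12$ one has $1-r>1-2r$, so $\{u>1-r\}\subseteq\{u>1-2r\}$, and in \emph{both} the AF and the DF case the joint outage set collapses to the \emph{same} region $\mathcal O=\big(\{u>1-r\}\cap\{v>1-2r\}\big)\cup\big(\{u>1-r\}\cap\{w>1-2r\}\big)$; for $r\ge\tfrac12$ the relay-on mode is always in outage and $\mathcal O=\{u>1-r\}$. Minimizing the linear objective $u+v+w$ over each of the two coordinate boxes composing $\mathcal O$ forces $u=(1-r)^+$, one of $v$ or $w$ to equal $(1-2r)^+$, and the remaining variable to be $0$, so $d(r)=(1-r)^++(1-2r)^+$ in every case (the formula also covers $r\ge\tfrac12$, where only the off-mode contributes).

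The step I expect to be the main obstacle is the one flagged in the introduction: the two modes share $h_{sd}$ (and the relay modes also share $h_{sr},h_{rd}$), so the two decision variables are statistically dependent and naive order statistics are unavailable. The point to make is that at the DMT scale this dependence is harmless---the joint outage event is a finite union of coordinate boxes in $(u,v,w)$-space, so its exponent comes from a trivial linear program rather than from joint densities of dependent variables---and that it is precisely because the off-mode lets the source use the \emph{entire} interval that one obtains the $(1-r)^+$ term instead of $(1-2r)^+$. A secondary point needing care is the reduction of the AF end-to-end SNR and of the half-duplex DF rate to their exponential orders, including a check that the various DF variants (repetition vs.\ independent codebooks, fixed vs.\ selection) all yield the same $d(r)$.
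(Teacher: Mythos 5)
Your proposal is correct and reaches the right answer, but it evaluates the decisive quantity by a genuinely different route than the paper. Both arguments share the same skeleton: two modes (source-only over the whole interval versus half-duplex relaying), Lemma~\ref{lemma:1} for the converse and Lemma~\ref{lemma:3} for achievability, so that everything reduces to the SNR exponent of $\prob(\mathcal O_{\mathrm{off}}\cap\mathcal O_{\mathrm{on}})$. The paper (Appendices~\ref{Appen:ODF} and~\ref{Appen:OAF}) evaluates this exponent through the conditional factorization $\prob(\mathcal O_{\mathrm{off}})\,\prob(\mathcal O_{\mathrm{on}}\mid\mathcal O_{\mathrm{off}})$: it writes down the conditional density of $|h_{sd}|^2$ given that the direct link is in outage, convolves it with the density of $|h_{rd}|^2$ (DF) or of the harmonic-mean term (AF), and integrates explicitly to show the conditional outage is $\doteq\rho^{2r-1}$. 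You instead map everything to the exponential orders $(u,v,w)$ and read the joint exponent off a linear program over a finite union of coordinate boxes, in the style of~\cite{Zheng2003,Azarian2005}. The two computations are equivalent, but yours treats AF and DF uniformly, makes the statistical dependence between the modes manifestly harmless (it only reshapes the outage region rather than requiring order statistics of dependent variables), and is the technique the paper itself adopts for the harder geometries in Appendices~\ref{Appen:SRC_NAF}--\ref{Appen:CF}; the paper's explicit-density route, in exchange, produces the exact high-SNR form of the outage probability rather than only its exponent. One small caveat: your DF rate expression $\tfrac12\min\{(1-v)^+,\max((1-u)^+,(1-w)^+)\}$ omits the fallback in which the relay fails to decode and the source repeats in the second half-interval, so your $\mathcal O_{\mathrm{on}}^{\mathrm{DF}}$ wrongly includes the slab $\{v>1-2r,\;u\le 1-2r\}$. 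This is harmless here because that slab is disjoint from $\mathcal O_{\mathrm{off}}=\{u>1-r\}$, but it would need to be corrected if the relay-on outage region were reused without the direct-link conditioning.
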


\begin{proof}
The proof is  relegated to
Appendices~\ref{Appen:ODF} and~\ref{Appen:OAF}. An outline of the
proof is as follows. The DMT of a point-to-point non-relayed link is
$d(r)=(1-r)^+$. DF and AF orthogonal relaying~\cite{Laneman2004} have
the DMT $d(r)=(1-2r)^+$. Using the techniques described in the
previous section, these two DMTs are combined. The main part of the
proof is to establish that the {\em conditional} DMT of the relay channel
subject to the direct link being in outage is $d(r)=(1-2r)^+$, similar
to its unconditional DMT, therefore the overall result follows from Lemma~\ref{lemma:1}.
\end{proof}


\begin{figure}
\begin{center}
\includegraphics{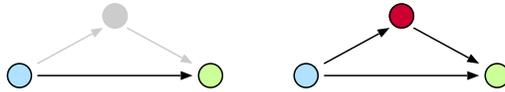}
\end{center}
\caption{The opportunistic modes in the simple orthogonal relay channel.}
\label{fig:relayma}
\end{figure}

\begin{remark}
For the simple relay channel shown above, there is no need to
investigate the opportunistic DDF and NAF, for the following
reason. In both NAF and DDF, it can be shown that the end-to-end
mutual information is never increased by removing the relay from the
network, because channel state information is already incorporated
into the operation of NAF and DDF in such a way that the usage of the
relay automatically adjusts to the quality of the links. 
\end{remark}

\begin{remark}
It has been known that the NAF protocol provides gains over orthogonal AF
but the NAF decoding can be complicated due to self-interference. The
results of this section show that the DMT gains of the NAF protocol
can be achieved with a much simpler decoding by using an
opportunistic relay on/off strategy. The cost is a small exchange of
channel state information for opportunistic relaying (1-bit feedback
from the destination node to the source and the relay).
\end{remark}


\section{Opportunistic Interference Relay Channel}
\label{sec:IRC}
This section is dedicated to the study of a $n\times n$ network with a
relay in the opportunistic mode. The topology of the links in this
network is identical to an interference relay channel, therefore this
structure is called an {\em opportunistic interference relay
  channel}. The naming is a device of convenience inspired by the
topology of the network.

For reference purposes, we briefly outline the background of {\em
  non-opportunistic} interference relay channel. The interference
channel~\cite{Carleial1978,Sato1981} together with a relay was
introduced by Sahin and Erkip~\cite{Sahin2007} (Figure~\ref{fig:IRC1})
who present achievable rates using full duplex relaying and rate
splitting. Sridharan et al.~\cite{Sridharan2008} present an achievable
rate region using a combination of the Han-Kobayashi coding scheme and
Costa's dirty paper coding, and calculate the degrees of
freedom. Maric et al.~\cite{Maric2008} study a special case where the
relay can observe the signal from only one source and forward the
interference to the other destination. Tannious and
Nosratinia~\cite{Tannious2008} show that the degrees of freedom for a
MIMO interference relay channel with number of antennas at the relay
matching or exceeding the number of users, is $k/2$ where $k$ is the
number of users.
\begin{figure}
\centering
\includegraphics{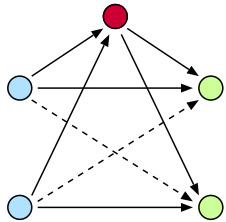}
\caption{Interference relay channel.}
\label{fig:IRC1}
\end{figure}

As mentioned earlier, opportunistic modes are defined such that the
data streams do not interfere, i.e., each receiving node is exposed to
one data stream at a time. Therefore, the two-user interference relay
channel has up to four access modes\footnote{In non-orthogonal CF,
  DDF, and NAF relaying protocols, the non-relayed modes never support
  higher rates than the relayed modes. Therefore in CF, DDF, NAF some
  of these modes are never selected and can be ignored.}  as shown in
Figure~\ref{fig:OIRC}.  The system selects one of the modes based on
the instantaneous link gains. In the following we analyze the network
under various relaying protocols and calculate the DMT in each case.

We start by developing a simple genie upper bound. Consider a genie
that provides the relay with perfect knowledge of the messages of the
transmitting sources. Thus access modes (c) and (d) are transformed
into a MISO channel with a DMT of $2(1-r)^+$. If the {\em genie-aided}
access mode (c) and (d) are in outage, then access modes (a) and (b)
will be in outage as well, therefore they need not be
considered. Applying Lemma~\ref{lemma:1}, the DMT of the $2\times2$
user opportunistic interference relay channel is upper bounded by
$4(1-r)^+$. This genie upper bound directly extends to $2n(1-r)^+$ for
the $n\times n$ user topology.

\subsection{Orthogonal Relaying}
Orthogonal relaying supports the full set of four access modes in
Figure~\ref{fig:OIRC}. Two of the modes do not involve the relay. In
the relay-assisted modes, a source transmits during the first half of
the transmission interval and the relay transmits in the second half
of the transmission interval.
\begin{figure*}
\centering
\includegraphics[width=6in]{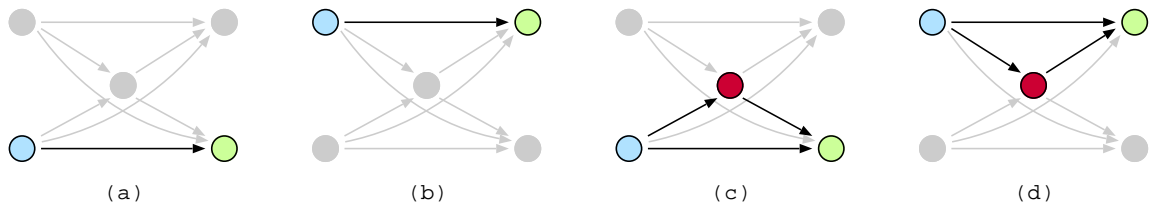}
\caption{The opportunistic access modes for the interference relay
  channel with orthogonal relaying.}
\label{fig:OIRC}
\end{figure*}
\subsubsection{Amplify and Forward Orthogonal Relaying}
In the relay-assisted modes, the relay amplifies the received signal
and forwards it to the destination. We select the mode that
minimizes the outage probability. The instantaneous mutual information
of the non-relay access modes is given by $I_i=\log(1+|h_{ii}|^2\rho)$
where $i=1,2$. The instantaneous mutual information for the
relay-assisted modes under orthogonal AF is given
by~\cite{Laneman2004,Hasna03}
\begin{equation}
\label{eq:mutaf1}
  I_{i+2}=\frac{1}{2} \log
  (1+|\hii |^2\rho+f(|\hir |^2\rho,|\hri |^2\rho)), \qquad i=1,2,
\end{equation}
where $f(x,y)=\frac{xy}{x+y+1}$. The selection criterion is as
follows. We first check the direct links. If none of the direct links
can support the rate $r\log\rho$, we check the access modes (c) and
(d). Using Lemma~\ref{lemma:1}, the total DMT is given by
\begin{align}
d(r)=d'_1(r)+d'_2(r)+d'_3(r)+d'_4(r),\label{eq:DMTOAF}
\end{align}
where
\begin{align}
  d'_1(r)=&\lim_{\rho\rightarrow\infty}\frac{\log\prob(e_1)}{\log\rho},&d'_2(r)=&\lim_{\rho\rightarrow\infty}\frac{\log\prob(e_2)}{\log\rho},\nonumber\\
  d'_3(r)=&\lim_{\rho\rightarrow\infty}\frac{\log\prob(e_3|e_1)}{\log\rho},
  &d'_4(r)=&\lim_{\rho\rightarrow\infty}\frac{\log\prob(e_4|e_2)}{\log\rho},\nonumber
\end{align}
It is easy to verify that $e_1$ and $e_3$ are independent from $e_2$
and $e_4$. Using techniques similar to the proof of 
Theorem~\ref{theorem:ODMT}, the outage
probability of the opportunistic orthogonal AF $2\times 2$
interference relay channel at high SNR is given by
\begin{align}
  \prob&(I<r\log
  \rho) \approx \bigg(\frac{e^{-2\rho^{2r-1}}-e^{-\rho^{r-1}}-e^{-2\rho^{2r-1}+\rho^{r-1}}+1}{1-e^{-\rho^{r-1}}}\bigg)^2(1-e^{-\rho^{r-1}})^2.\nonumber
\end{align}
The total DMT can be shown to be:
\begin{equation}
  d(r)=2(1-r)^++2(1-2r)^+.
\end{equation}
Generalization to $n$ source-destination pairs follows
easily; the corresponding DMT is $d(r)=n(1-r)^++n(1-2r)^+.$
\subsubsection{Decode and Forward Orthogonal Relaying}

We use the same selection technique used in the orthogonal AF
relaying. The instantaneous mutual information for the relay-assisted
modes is given by by~\cite{Laneman2004}
\begin{align}
  I_{i+2}&=\frac{1}{2}\log\big(1+\rho U_i\big), i=1,2
\end{align}
where 
\begin{equation}
  U_i=
  \begin{cases}
    2|h_{ii}|^2&|h_{ir}|^2<\frac{\rho^{2r}-1}{\rho}\\
    |h_{ii}|^2+|h_{ri}|^2&|h_{ir}|^2\ge\frac{\rho^{2r}-1}{\rho}
  \end{cases}
\end{equation}
With the same type of argument used to calculate the DMT for the
opportunistic orthogonal AF interference relay channel and
Appendix~\ref{Appen:ODF}, the outage probability of the opportunistic
orthogonal $2\times 2$ DF interference relay channel at high SNR is
given by
\begin{align}
\prob&(I<r\log \rho)\approx\bigg(1-e^{-\rho^{2r-1}}+\frac{ (1-
e^{-\rho^{r-1}}-\rho^{r-1}e^{-\rho^{2r-1}})e^{-\rho^{2r-1}}}{1- e^{-\rho^{r-1}}}\bigg)^2(1-e^{-\rho^{r-1}})^2.\nonumber
\end{align}
It can be shown that the DMT in case of orthogonal DF is
\begin{equation}
  d(r)=n(1-r)^++n(1-2r)^+.
\end{equation}

\subsection{Non orthogonal relaying}
In the non-orthogonal protocols considered in this section, the source
transmits throughout the transmission interval, while the relay
transmits during part of the transmission interval. The source and
relay signals are  superimposed at the destination. Note
that this superposition does not violate our working definition of
opportunistic communication, which states that the received signal at each
destination is independent of all but one of the transmitted
messages.


Under the non-orthogonal relaying protocols, the interference relay
channel has only two access modes, Figure~\ref{fig:OIRC} (c) and
(d). Access modes (a) and (b) are not considered, because it can be
shown that in non-orthogonal relaying, the end-to-end mutual
information of the relay-assisted modes is always greater than the
corresponding non-relayed modes.
\subsubsection{Non Orthogonal Amplify and Forward}
For half the transmission interval, the received signal at
the destination and at the relay are given
by~\cite{Nabar2004}
\begin{align}
\label{eq:AFout1}
\yonei=\sqrt{\rho}\, \hii \, \xonei+\none1 ,\;\;
\yoner=\sqrt{\rho}\; \hir \; \xonei+\noner ,\nonumber
\end{align}
The variables $x,y,n$ have two subscripts indicating the appropriate
half-interval and node identity, respectively. For example, $\yoner$
is the received signal during the first half-interval at the relay,
while $\xonei$ is the transmit signal at the first half-interval from
source $i$.  At the second half of the transmission interval the relay
normalizes the received signal (to satisfy the relay power constraint) and
retransmits it. The destination received signal in the second half is
given by
\begin{equation}
\label{eq:AFout2}
\ytwoi=\sqrt{\rho}\; \hii \; \xtwoi +\frac{\sqrt{\rho} \, \hri }{\sqrt{\rho|\hir |^2+1}}\; \yoner+\ntwoi ,\nonumber
\end{equation}
where a similar notation holds. 
The effective destination noise during this time is
$\frac{\sqrt{\rho}\hri }{\sqrt{\rho|\hir |^2+1}}\, \noner+\ntwoi $.

User $i^*$ is selected to maximize the mutual information, which at
high SNR can be shown to lead to the following selection rule:
\begin{align}
  i^*&=\arg\max_i I_i
   =\arg\max_i\bigg\{\frac{|\hii |^4|\hir |^2}{|\hri |^2+|\hir |^2}\bigg\},
\label{eq:NAFselection}
\end{align}

Using our knowledge of the DMT of non-opportunistic
NAF~\cite{Azarian2005} which is given by $d(r)=(1-r)^++(1-2r)^+$, and
applying Lemmas~\ref{lemma:2}, \ref{lemma:3} and using the selection
criterion $i^*$ from Equation~(\ref{eq:NAFselection}), the DMT of
opportunistic NAF interference relay channel with $n$
source-destination pairs is
\begin{equation}
d(r)=n(1-r)^++n(1-2r)^+.
\end{equation}
\subsubsection{ Dynamic Decode and Forward}
The relay listens to the source until it has enough information to
decode. The relay re-encodes the message using an independent Gaussian
codebook and transmits it during the remainder of the transmission
interval. The time needed for the relay to decode the message depends
on the quality of the source-relay channel.
Using~\cite{Azarian2005} and Lemma~\ref{lemma:1}, the DMT
of the optimal opportunistic DDF interference relay channel is as
follows:
\begin{equation}
  d(r)=
  \begin{cases}
    2n(1-r)  & 0\le r \le \frac{1}{2},\\
    n\frac{1-r}{r}&\frac{1}{2}< r \le 1.
  \end{cases}
\label{eq:DDF-upperbound}
\end{equation}

Compared to the other protocols considered for the interference relay
channel, the DDF mutual information for each node has a more complex
expression. This provides an impetus for the analysis of simpler
selection scenarios. It has been observed elsewhere in this paper that
selection based on source-destination link gains sometimes may perform
well, therefore we consider that choice function for the DDF
interference relay channel. Following the same technique
as~\cite{Abouelseoud2008B}, the resulting DMT can be shown to be
\begin{equation}
  d(r)=
  \begin{cases}
  (n+1)(1-r) &
   0\le r<\frac{n}{n+1}\\ n\frac{1-r}{r}&\frac{n}{n+1}\le r\le 1
  \end{cases}
\end{equation}
It is observed that for DDF, selection based on direct link gains is
clearly suboptimal, especially at low multiplexing gains.

\begin{figure*}[t]
\centering \includegraphics[width=4.5in]{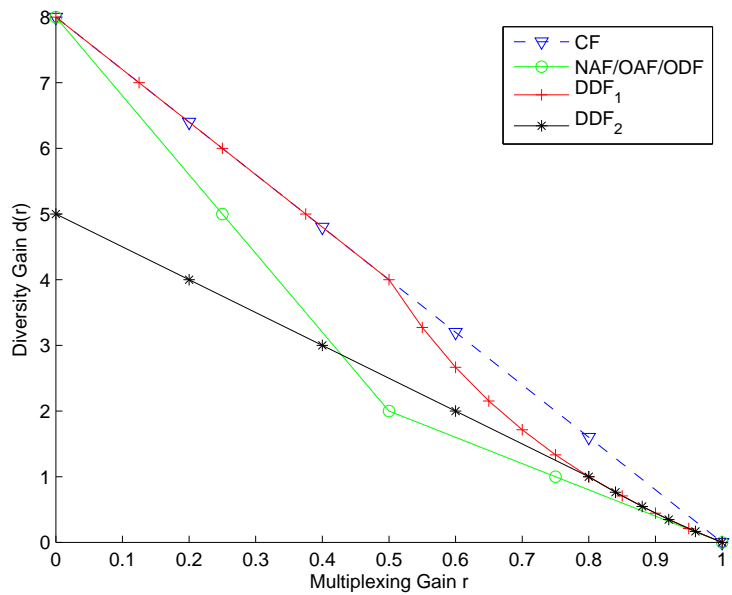}
\caption{Diversity multiplexing trade-off for a 4 source-destination
  pairs interference relay channel using different opportunistic
  relaying schemes.}
\label{fig:DMTIRC}
\end{figure*}
\subsubsection {Compress and Forward}
Following~\cite{Yuksel2007}, the relay listens to the selected source for a percentage $t$ of the
transmission interval. The source and the relay perform block Markov
superposition coding, and the destination employs backward
decoding~\cite{Kramer2005}. The relay performs Wyner-Ziv compression,
exploiting the destination's side information. This ensures that the
relay message can be received error free at the receiver. The relay
compression ratio must satisfy
\begin{equation}
I(\hat{y}_r;y_r|x_r, y_d)\le I(x_r;y_d).
\end{equation}
Yuksel and Erkip~\cite{Yuksel2007} show that the optimal DMT,
$d(r)=2(1-r)^+$, is achieved when the relay listens for half the
transmission interval and transmits during the remainder of time in
the interval\footnote{The work in~\cite{Yuksel2007} assumes transmit
  channel state information at the relay to insure that the relay's
  message reaches the destination error free. Recent
  work~\cite{Pawar2008} proves that the same DMT can be achieved using
  quantize-and-map relaying with only receiver-side channel state
  information. Another relaying protocol, dynamic
  compress-and-forward, is analyzed in~\cite{Gunduz2008} without a
  direct link and is shown to achieve the optimal DMT without channel
  state information at the relay.}.

For opportunistic compress and forward interference relay channel,
the user $i^*=\arg\max_i I_i$ is selected, where $I_i$ is the mutual
information for each access mode. At high-SNR, using results
from~\cite{Yuksel2007}, the selected user $i^*$ can be proved to be
\begin{align}
  i^*=&\arg\max_i \frac{(|h_{s_i,r}|^2+|h_{s_i,d_i}|^2)(|h_{r,d_i}|^2+
  |h_{s_i,d_i}|^2)|h_{s_i,d_i}|^2} {(|h_{s_i,r}|^2+|h_{s_i,d_i}|^2)
  +(|h_{r,d_i}|^2 +|h_{s_i,d_i}|^2)}.\nonumber
\end{align}
Each mode can achieve a DMT $d(r)=2(1-r)^+$, hence the opportunistic
system with $n$ source-destination pairs can achieve the DMT 
$d(r)=2n(1-r)^+$.

Figure~\ref{fig:DMTIRC} compares the DMT of various relaying schemes
for the interference relay channel with four source-destination
pairs. The optimal opportunistic DDF relaying is denoted by DDF1 and
DDF relaying with the simple selection criterion (based on
source-destination link gains) is denoted by DDF2. Compress and
forward achieves the optimal DMT but requires full CSI at the relay.

\section{Opportunistic Shared Relay Channel}
\label{sec:SRC}
The shared relay channel (SRC) (Figure~\ref{fig:SRC}) was introduced
in~\cite{Tajer2006a} with the sources using TDMA channel access and
orthogonal source and relay transmissions.
In~\cite{Khojastepour2008}, based on superposition and dirty paper
coding, lower and upper bounds on the capacity of additive white
Gaussian noise (AWGN) MIMO shared relay channel are presented.

\begin{figure}
\centering \includegraphics{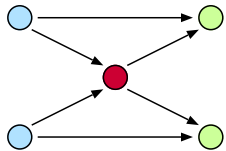}
\caption{Shared relay channel.}
\label{fig:SRC}
\end{figure}

In the shared relay channel, the direct link between each source
and its destination is free from interference from the other source,
however, the relay can cause indirect interference if it assists both
sources at the same time. Therefore, in the opportunistic mode the
relay should either assist one of the users or none of them
(Figure~\ref{fig:SRC3}). We assume the access mode that minimizes the
outage probability is chosen. In our analyses, access modes support
equal rate, thus in the first two access modes, one source transmits
at rate $R=r\log\rho$, while in the third access mode both sources
transmit, each with a rate $R_i=r/2\log\rho$.

\subsection{DMT Upper Bound}

An easy upper bound can be found by adding the DMT of the three access
modes without considering the dependencies among the throughputs of
the three access modes.
\begin{figure*}
\centering \includegraphics{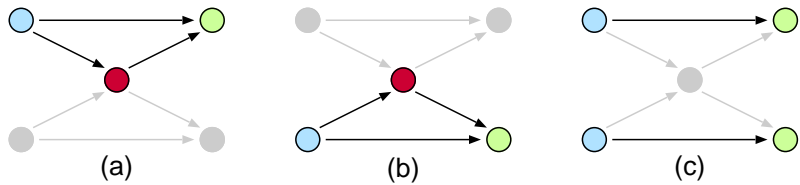}
\caption{Opportunistic access modes for the shared relay channel.}
\label{fig:SRC3}
\end{figure*}
A tighter upper bound can be found by assuming a genie that provides
the relay with the source information. In Figure~\ref{fig:SRC3}, we
call modes (a) and (b) {\em relay-assisted access modes} while
denoting mode (c) a {\em non-relayed access mode}. Thus, in the
presence of a genie, the relay-assisted access modes are essentially
equivalent to MISO links. The non-relay access mode is obviously not
affected by the genie. 
\begin{theorem}
  A DMT upper bound for the genie aided opportunistic shared relay channel
  is given by
\begin{align}
  d(r)& \le \big(1-\frac{r}{n}\big)^++(2n-1)(1-r)^+\\ &=
\begin{cases}
  2n-(2n-1+\frac{1}{n})r & 0\le r \le 1\\ (1-\frac{r}{n}) & 1 < r \le n
\end{cases}
\end{align}
\end{theorem}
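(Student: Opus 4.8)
\emph{Proof plan.} The plan is to enumerate the $n+1$ opportunistic modes, determine the DMT of each in isolation, and then combine them through Lemma~\ref{lemma:1}, with care taken over the single nontrivial dependency. The modes are: for each $i\in\{1,\dots,n\}$ a relay-assisted mode (mode $i$) in which only source $i$ transmits at rate $r\log\rho$ with the relay's help, and one non-relayed mode (mode $c$) in which all $n$ non-interfering source--destination pairs transmit simultaneously, each at rate $(r/n)\log\rho$. For mode $c$ I would observe that a single weak direct link already causes an aggregate error, so $\prob(e_c)\doteq\rho^{-(1-r/n)^+}$ and its DMT is $(1-r/n)^+$. For mode $i$, the genie supplies the relay with source $i$'s message, so the mode is at best a $2\times 1$ MISO channel over the independent gains $h_{s_1d_1}$-type direct link and the relay link, with outage probability $\doteq\rho^{-2(1-r)^+}$ at rate $r\log\rho$ --- exactly as the relay-assisted modes were handled in the interference-relay section --- hence its DMT is at most $2(1-r)^+$.

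Next I would record the dependency structure: the $n$ relay-to-destination links are distinct, so modes $1,\dots,n$ involve pairwise-disjoint channel gains and are mutually independent, and the \emph{only} coupling is that mode $i$ shares the direct link $h_{s_id_i}$ with mode $c$. I would then apply Lemma~\ref{lemma:1} with the ordering $c,1,2,\dots,n$, giving $d(r)\le d'_c(r)+\sum_{i=1}^n d'_i(r)$ with $d'_c(r)=(1-r/n)^+$. The crux is $d'_1(r)$, the conditional DMT of mode $1$ given mode $c$ in outage. In $\prob(e_1\mid e_c)=\prob(e_1,e_c)/\prob(e_c)$ with $\prob(e_c)\doteq\rho^{-(1-r/n)^+}$, the event $e_1$ forces $|h_{s_1d_1}|^2\dot\le\rho^{r-1}$ and the relay link $\dot\le\rho^{r-1}$, while $e_c$ contains the sub-event $\{|h_{s_1d_1}|^2\dot\le\rho^{r/n-1}\}$; since $r/n-1\le r-1$, this sub-event already supplies the weakness of $h_{s_1d_1}$ that $e_1$ demands, so the dominant realization of $e_1\cap e_c$ needs only $|h_{s_1d_1}|^2\dot\le\rho^{r/n-1}$ together with the relay link $\dot\le\rho^{r-1}$. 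Hence $\prob(e_1,e_c)\doteq\rho^{-(1-r/n)^+-(1-r)^+}$ and $d'_1(r)=(1-r)^+$: the conditioning consumes one of mode $1$'s two diversity branches. I would also check that the competing realization of $e_c$ (a weak direct link of pair $j\ne 1$) is exponentially subdominant inside $e_1\cap e_c$, since it needs both of mode $1$'s branches weak \emph{plus} an extra weak link; and that for $j=2,\dots,n$ the typical realization of $e_c\cap e_1\cap\dots\cap e_{j-1}$ leaves pair $j$'s gains unconstrained, so $d'_j(r)=2(1-r)^+$.

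Summing gives $d(r)\le(1-r/n)^++(1-r)^++(n-1)\cdot 2(1-r)^+=(1-r/n)^++(2n-1)(1-r)^+$; restricting to $0\le r\le 1$ (where this equals $2n-(2n-1+\frac{1}{n})r$) and to $1<r\le n$ (where $(1-r)^+=0$) yields the stated piecewise form, the two pieces agreeing at $r=1$. I expect the main obstacle to be precisely the conditional-DMT step for $d'_1$: enumerating the realizations of the union event $e_c$, identifying which is exponentially dominant inside $e_1\cap e_c$, and keeping the $(\cdot)^+$ truncations consistent across the regimes $r/n<r<1$ and $r\ge 1$ --- in other words, the ``order statistics of dependent random variables'' difficulty flagged in the introduction. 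Once that exponent is secured, the remainder is bookkeeping through Lemma~\ref{lemma:1}.
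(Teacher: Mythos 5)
Your bound and all the intermediate exponents are correct, and the overall strategy (Lemma~\ref{lemma:1} plus conditional DMTs, with the non-relayed mode contributing $(1-r/n)^+$ and exactly one diversity branch lost to the shared direct link) matches the paper's. The decomposition differs in a way worth noting. The paper (Appendix~\ref{Appen:SRC2}, written for $n=2$) uses an \emph{adaptive} ordering: after the non-relayed mode, ``Mode~2'' is defined as whichever relayed mode has a direct link still able to support half the rate, and ``Mode~3'' is the remaining one; the lost branch then shows up in the \emph{last} factor, $\prob({\cal U}_3\mid{\cal U}_2,{\cal U}_1)\doteq\rho^{r-1}$, while $\prob({\cal U}_2\mid{\cal U}_1)\doteq\rho^{2(r-1)}$ retains full MISO diversity. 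Establishing this requires partitioning ${\cal U}_1$ into the events ${\cal V}_1,{\cal V}_2$ and integrating explicit conditional densities of $|h_{ii}|^2+|h_{ri}|^2$ given $|h_{ii}|^2$ small. You instead fix the ordering $c,1,2,\ldots,n$ and let the \emph{first} relayed factor absorb the loss, $d'_1=(1-r)^+$, via a dominant-realization argument on the union event $e_c$: since the cheapest way to realize $e_1\cap e_c$ is to let pair~1's direct link be the one that is weak at level $\rho^{r/n-1}$, the conditioning on $e_c$ pays for one of mode~1's two branches, and all subsequent relayed modes see fresh, unconstrained gains. The two factorizations of $\prob(e_c,e_1,\ldots,e_n)$ give the same total exponent $(1-r/n)+(2n-1)(1-r)$; yours avoids the adaptive mode labeling and the conditional-pdf integrals, and handles general $n$ directly rather than by extrapolating from the two-pair computation. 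Your identification of the crux --- checking that the competing realization of $e_c$ (a weak crosspair direct link) is exponentially subdominant inside $e_1\cap e_c$ --- is exactly the right point to be careful about, and your check is correct.
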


\begin{proof}
The proof uses Lemma~\ref{lemma:1} taking into account the
dependency between the different access modes. Details of the proof are
given in Appendix~\ref{Appen:SRC2}.
\end{proof}
We notice that for high multiplexing gain, $r>1$, the first and second
access modes do not contribute to the diversity gain where the third
mode is always active. For low multiplexing gain, $r\le 1$, the three
access modes are contributing to the total diversity gain of the
system and switching between the three access modes should be
considered.

For clarity of exposition, we assume two source-destination pairs in
the remainder of the analysis. However, the analysis is extendable to
any number of node pairs in a manner that is straightforward.

\subsection{Achievable DMT}

If we allow ourselves to be guided by the upper bound
above, it is reasonable to use the non-relay access mode for high
multiplexing gains $(r>1)$. This makes intuitive sense, since relayed
access modes cannot support high multiplexing gains. For
multiplexing gains less than 1, switching between the three access mode
should be considered.

In the following we can consider a simplified selection by
partitioning the decision space: in one partition (at low multiplexing
gains) choosing only among relayed access modes (easier due to their
independence), in the other partition (at high multiplexing gains)
using only the non-relayed mode. This hybrid scheme sometimes is a
sufficient easy switching scheme and one can thus avoid the cost of
the comparison among all three modes especially in the cases of large
number of users. Using results from~\cite{Azarian2005}
and~\cite{Yuksel2007}, this
strategy leads to the following DMT for NAF
$d(r)=\max\Big\{2(1-r)^++2(1-2r)^+,\Big(1-\frac{r}{2}\Big)^+\Big\}$,
for DDF
\begin{align}
d(r)=&
\begin{cases}
    4(1-r) &0\le r\le 0.5\\
 2\frac{1-r}{r}& 0.5<r\le 3-\sqrt5\\
 1-\frac{r}{2} & 3-\sqrt5<r\le2.
  \end{cases}
\end{align} 
and for CF $d(r)=\max\Big\{4(1-r)^+,\Big(1-\frac{r}{2}\Big)^+\Big\}$.

Naturally, there is no guarantee that the above strategy is
optimal. For the best results, once must compare directly the three
opportunistic modes, but then the DMT requires nontrivial
calculations, as characterized by the following results.

The following DMT are subject to the two conditions mentioned in
Lemma~\ref{lemma:3}.

\subsubsection{Non-orthogonal Amplify and Forward}

\begin{theorem}
The overall DMT for the opportunistic shared relay channel under NAF
relaying protocol is given by
\begin{align}
  d(r)&=2(1-2r)^++(1-\frac{r}{2})^++(1-r)^+\nonumber\\
  &=\begin{cases}
    4-\frac{11}{2}r& 0\le r\le0.5\\
2-\frac{3}{2}r& 0.5<r\le 1\\
1-\frac{r}{2}&  1<r\le 2.
  \end{cases}
\end{align}
\end{theorem}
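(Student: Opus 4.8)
\emph{Proof plan.} The approach follows the template of Theorem~\ref{theorem:ODMT}: enumerate the opportunistic access modes, use Lemma~\ref{lemma:3} to reduce the analysis to the event that \emph{all} modes are simultaneously in outage, and then combine the modes in a convenient order as in Lemma~\ref{lemma:1}. For $n=2$ there are three modes (Figure~\ref{fig:SRC3}): the two relay-assisted NAF modes $(a)$ and $(b)$, in which the relay assists source~$1$ (resp.\ source~$2$) transmitting at rate $r\log\rho$, and the non-relayed mode $(c)$, in which both sources transmit simultaneously over their interference-free direct links, each at rate $\tfrac r2\log\rho$. Because we impose the two hypotheses of Lemma~\ref{lemma:3}, the system outage probability satisfies $P_e\doteq\prob(\mathcal O_a\cap\mathcal O_b\cap\mathcal O_c)$, where $\mathcal O_a,\mathcal O_b,\mathcal O_c$ are the outage events of the three modes.

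Next I would describe these events in the exponential-order domain. Set $|h_{11}|^2\doteq\rho^{-a_1}$, $|h_{1r}|^2\doteq\rho^{-b_1}$, $|h_{r1}|^2\doteq\rho^{-c_1}$ and analogously for pair~$2$, all exponents non-negative. The non-opportunistic NAF analysis of~\cite{Azarian2005} (which yields the NAF DMT $(1-r)^++(1-2r)^+$) shows, at the level of exponents, that mode~$(a)$ is in outage precisely when the direct link fails the target rate \emph{and} the two-hop relay path fails it, i.e.\ $a_1>1-r$ and $\max(b_1,c_1)>1-2r$; likewise for mode~$(b)$ with the pair-$2$ gains. Mode~$(c)$ is in outage iff at least one direct link cannot support $\tfrac r2\log\rho$, so $\mathcal O_c=\mathcal O_{c,1}\cup\mathcal O_{c,2}$ with $\mathcal O_{c,i}=\{a_i>1-\tfrac r2\}$. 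The key structural fact is that $\mathcal O_a$ and $\mathcal O_b$ depend on disjoint sets of channel gains (distinct source--relay and relay--destination links) and are hence independent, whereas $\mathcal O_{c,1}$ shares the gain $h_{11}$ with $\mathcal O_a$ and $\mathcal O_{c,2}$ shares $h_{22}$ with $\mathcal O_b$; this coupling is exactly what makes the answer smaller than the $2(1-r)^++2(1-2r)^++(1-\tfrac r2)^+$ that Lemma~\ref{lemma:2} would predict for independent modes.

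I would then collapse the three-mode joint outage to a product of two exponents. A union bound plus the pair-symmetry give $\prob(\mathcal O_a\cap\mathcal O_b\cap\mathcal O_c)\doteq\prob(\mathcal O_a\cap\mathcal O_b\cap\mathcal O_{c,1})$, and since $\mathcal O_b$ involves only pair-$2$ gains it factors out: $P_e\doteq\prob(\mathcal O_b)\,\prob(\mathcal O_a\cap\mathcal O_{c,1})$, with $\prob(\mathcal O_b)\doteq\rho^{-[(1-r)^++(1-2r)^+]}$. For the remaining factor, observe that $\mathcal O_{c,1}=\{a_1>1-\tfrac r2\}$ already implies the direct-link part $\{a_1>1-r\}$ of $\mathcal O_a$, so $\mathcal O_a\cap\mathcal O_{c,1}=\{a_1>1-\tfrac r2\}\cap\{\max(b_1,c_1)>1-2r\}$; since $a_1$ is independent of $(b_1,c_1)$, the Laplace (Zheng--Tse) principle gives $\prob(\mathcal O_a\cap\mathcal O_{c,1})\doteq\rho^{-[(1-\tfrac r2)^++(1-2r)^+]}$. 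Multiplying,
\[
P_e\;\doteq\;\rho^{-\left[(1-r)^++2(1-2r)^++(1-\tfrac r2)^+\right]},
\]
and splitting on $r\in[0,\tfrac12]$, $(\tfrac12,1]$, $(1,2]$ turns the bracket into $4-\tfrac{11}{2}r$, $2-\tfrac32 r$, $1-\tfrac r2$ respectively, which is the stated DMT. Equality (not merely the Lemma~\ref{lemma:1} upper bound) holds because, under the selection rule of Lemma~\ref{lemma:3}, the system is in outage on precisely the event above, and the NAF codebooks of~\cite{Azarian2005} provide condition~(1) of that lemma.

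The step I expect to be the main obstacle is pinning down the NAF outage event in the exponential-order domain carefully enough to intersect it with the weak-direct-link event $\mathcal O_{c,1}$: one must verify that the relay-path contribution to NAF outage is exactly $(1-2r)^+$, that it is governed by the weaker of the two hop gains $h_{1r},h_{r1}$, and --- crucially --- that this contribution is unchanged when one conditions on a direct link forced to be even weaker ($a_1>1-\tfrac r2$) than the generic NAF outage event requires ($a_1>1-r$). This forces working with the effective $2\times2$ NAF channel --- the amplification gain $\beta$, the coloured destination noise, and the $\tfrac12$ prelog --- rather than treating NAF as a black box; once the outage region is described as a polyhedron in $(a_1,b_1,c_1)$, the remaining minimisations of $a_1+b_1+c_1$ over the relevant faces are routine.
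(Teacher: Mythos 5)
Your proposal is correct and reaches the paper's answer by what is at bottom the same decomposition, but with noticeably lighter machinery. The paper (Appendices~\ref{Appen:SRC2} and~\ref{Appen:SRC_NAF}) fixes an explicit selection order (non-relayed mode first, then the relayed mode whose direct link survives rate $\tfrac r2\log\rho$, then the remaining relayed mode) and computes the three conditional exponents of Lemma~\ref{lemma:1} one at a time: $d_1'=(1-\tfrac r2)^+$, $d_2'=(1-r)^++(1-2r)^+$ via the ${\cal V}_1/{\cal V}_2$ partition of the Mode-1 outage event, and $d_3'=(1-2r)^+$ via a conditional exponential-order density and an Azarian-type optimization over the region $v_1\ge 1-\tfrac r2$. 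You instead evaluate the joint outage $\prob(\mathcal O_a\cap\mathcal O_b\cap\mathcal O_c)$ directly, reduce $\mathcal O_c$ to one of its halves by symmetry, factor out $\prob(\mathcal O_b)$ by independence, and observe that $\{a_1>1-\tfrac r2\}$ is nested inside the direct-link part $\{a_1>1-r\}$ of $\mathcal O_a$, so the intersection costs exactly $(1-\tfrac r2)^++(1-2r)^+$. This nesting observation collapses the paper's two most laborious conditional computations into one line, which is a genuine simplification; the chain rule makes the two organizations formally equivalent, and the totals agree term by term. Two small caveats: your description of the NAF relay-path outage as $\max(b_1,c_1)>1-2r$ is not literally the Azarian region (which, with $m=l/2$, reads $b_1+c_1>1-2r$, i.e.\ a condition on the \emph{sum} of the hop exponents), though both regions have infimum $(1-2r)^+$ for $b_1+c_1$ and hence yield the same DMT --- you flag this yourself as the step needing care; and you lean on Lemma~\ref{lemma:3} for the error--outage equivalence, which the paper actually proves separately for NAF in Appendix~\ref{Appen:NAF}. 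Neither affects the conclusion.
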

\begin{proof}
The proof uses Lemma~\ref{lemma:1} and results from MIMO
point to point communication~\cite{Zheng2003} and NAF
relaying~\cite{Azarian2005} taking into account the dependency between
the different access modes. Details are given in Appendix~\ref{Appen:SRC_NAF}.
\end{proof}

\subsubsection{Dynamic Decode and Forward}

\begin{theorem}
\label{theorem:SRC-DDF}
The overall DMT for the opportunistic shared relay channel under DDF relaying
protocol is given by
\begin{equation}
  d(r)=
  \begin{cases}
    \big(1-\frac{r}{1-r}\big(1-\frac{r}{2}\big)\big)+2(1-r)+
\big(1-\frac{r}{2}\big), & 0\le r \le 0.5\\

 2\frac{(1-r)}{r},& 0.5<r\le 2-\sqrt2\\

 \frac{(1-r)}{r}+\big(1-\frac{r}{2}\big),&2-\sqrt2 < r \le 1,\\
\big(1-\frac{r}{2}\big),&1 < r\le 2.
  \end{cases}
\end{equation}
\end{theorem}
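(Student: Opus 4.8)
The plan is to obtain the DMT as a tight sum of conditional exponents via Lemma~\ref{lemma:1}, exploiting the special independence structure of the shared relay channel. The three opportunistic modes of Figure~\ref{fig:SRC3} are: mode~(a), the single-relay DDF channel on links $h_{s_1r},h_{rd_1},h_{s_1d_1}$ carrying rate $r\log\rho$; mode~(b), its twin on $h_{s_2r},h_{rd_2},h_{s_2d_2}$; and mode~(c), the non-relayed mode, which under the equal-rate convention is a pair of \emph{independent} point-to-point links, each operated at multiplexing $r/2$. The structural fact that drives the proof is that, since the crosslinks are absent, modes~(a) and~(b) depend on disjoint sets of channel gains and are therefore statistically independent; the only coupling among the three modes is that mode~(c) reuses the direct links $h_{s_1d_1}$ and $h_{s_2d_2}$. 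I would order the modes in Lemma~\ref{lemma:1} as~(c), then~(a), then~(b).

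Mode~(c) is in outage whenever either direct link fails at multiplexing $r/2$, so $\prob(\text{(c) out})\doteq\rho^{-(1-r/2)^+}$ and $d'_1=(1-r/2)^+$. For the two remaining conditional exponents I would reduce everything to a single quantity: let $g(r)$ be the SNR-exponent of $\prob\big(\text{(a) in outage}\cap\{\,|h_{s_1d_1}|^2\dot\le\rho^{-(1-r/2)}\,\}\big)$, and let $d_{\mathrm{DDF}}(r)$ denote the known single-relay DDF outage exponent from~\cite{Azarian2005}, namely $2(1-r)$ for $0\le r\le\tfrac12$, $\tfrac{1-r}{r}$ for $\tfrac12<r\le1$, and $0$ thereafter. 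The event $\{\text{(c) out}\}$ splits into the branch $\{|h_{s_1d_1}|^2\dot\le\rho^{-(1-r/2)}\}$ (whose intersection with $\{\text{(a) out}\}$ has exponent $g(r)$) and the branch $\{|h_{s_2d_2}|^2\dot\le\rho^{-(1-r/2)}\}$ (independent of mode~(a), so its intersection with $\{\text{(a) out}\}$ has exponent $d_{\mathrm{DDF}}(r)+(1-r/2)\ge g(r)$ and is subdominant). Hence $\prob(\text{(a)}\cap\text{(c) out})\doteq\rho^{-g(r)}$, giving $d'_2=g(r)-(1-r/2)$; and since mode~(b) is independent of both, $\prob(\text{(a)}\cap\text{(b)}\cap\text{(c) out})\doteq\rho^{-(g(r)+d_{\mathrm{DDF}}(r))}$, giving $d'_3=d_{\mathrm{DDF}}(r)$. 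Summing, $d(r)=d'_1+d'_2+d'_3=d_{\mathrm{DDF}}(r)+g(r)$, and this chain of bounds is tight by Lemma~\ref{lemma:3}, since modes are selected by instantaneous end-to-end mutual information and DMT-optimal codebooks are assumed throughout.

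The substantive step is evaluating $g(r)$, which is the same flavor of calculation used in Theorem~\ref{theorem:ODMT} — the conditional DMT of a relay channel given that its direct link is weak — but now with DDF relaying and with the direct link conditioned at multiplexing $r/2$ rather than~$r$. I would write the DDF destination mutual information in terms of the exponents $v_{sd},v_{sr},v_{rd}$ of $|h_{s_1d_1}|^2,|h_{s_1r}|^2,|h_{rd_1}|^2$ and the relay listening fraction $f=\min\{1,\,r/(1-v_{sr})^+\}$, so that mode~(a) is in outage iff $f(1-v_{sd})^++(1-f)\big(1-\min(v_{sd},v_{rd})\big)^+<r$ (plus the degenerate branch in which the relay never decodes and the link reverts to direct transmission), and then minimize $v_{sd}+v_{sr}+v_{rd}$ over the outage region intersected with $\{v_{sd}\ge 1-r/2\}$. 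The governing dichotomy is whether that extra constraint is already implied by the \emph{unconstrained} DDF-outage minimizer: unconstrained DDF outage forces $v_{sd}>\tfrac{1-r}{r}$ at its dominant realization, and $\tfrac{1-r}{r}\ge 1-r/2$ precisely when $r\le 2-\sqrt2$. Thus for $\tfrac12<r\le 2-\sqrt2$ the constraint costs nothing and $g(r)=d_{\mathrm{DDF}}(r)=\tfrac{1-r}{r}$; for $2-\sqrt2<r\le 2$ it binds, the optimizer is $v_{sd}=1-r/2$, $v_{sr}=v_{rd}=0$, and $g(r)=1-r/2$; and for $0\le r\le\tfrac12$ it also binds, the optimizer is $v_{sd}=1-r/2$, $v_{sr}=0$, $v_{rd}=1-\tfrac{r}{1-r}(1-\tfrac r2)$, and $g(r)=2-\tfrac r2-\tfrac{r}{1-r}(1-\tfrac r2)$. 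Substituting into $d(r)=d_{\mathrm{DDF}}(r)+g(r)$ and collecting terms across the breakpoints $r=\tfrac12,\,2-\sqrt2,\,1$ reproduces the stated four-piece formula.

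The hard part will be the constrained outage optimization for $g(r)$: although it is only a small piecewise-linear program in $(v_{sd},v_{sr},v_{rd},f)$, one must carefully track the minimization over $f\in[0,1]$ and the $\min(v_{sd},v_{rd})$ term and identify which channel realization dominates in each multiplexing regime — the breakpoints $2-\sqrt2$ and $\tfrac12$ emerge exactly as the boundaries of these cases. A secondary point is verifying the subdominance of the $\{|h_{s_2d_2}|^2\text{ small}\}$ branch (the easy inequality $g(r)\le d_{\mathrm{DDF}}(r)+(1-r/2)$) and confirming that the instantaneous-mutual-information selection rule satisfies the second condition of Lemma~\ref{lemma:3}, so that all the exponent bounds above are attained with equality.
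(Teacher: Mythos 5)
Your proposal is correct and follows essentially the same route as the paper's Appendix~\ref{Appen:DDF}: both reduce the problem via Lemma~\ref{lemma:1} to the sum of the direct-mode exponent $(1-\frac{r}{2})^+$, the unconditional single-relay DDF exponent, and the conditional DDF exponent given a direct link weak at multiplexing $r/2$, and both solve the same constrained piecewise-linear program in exponent space (your $g(r)$ equals the paper's $d_3'(r)+(1-\frac{r}{2})$, with the same breakpoints at $\tfrac12$ and $2-\sqrt{2}$ arising from exactly the comparison $\tfrac{1-r}{r}\gtrless 1-\tfrac{r}{2}$). The only cosmetic differences are your joint-probability bookkeeping versus the paper's conditional exponents, and the loosely worded claim that mode~(b) is ``independent of both'' (it shares $h_{s_2d_2}$ with mode~(c)) --- an imprecision your explicit two-branch split of the mode-(c) outage event already repairs.
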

\begin{proof}
The proof uses Lemma~\ref{lemma:1} and results for DDF
relaying~\cite{Azarian2005}, while taking into account the dependency
between the three access modes. Details are given in
Appendix~\ref{Appen:DDF}.
\end{proof}

\subsubsection{Compress and forward}
As mentioned earlier, the hybrid strategy yields the following DMT.
\begin{equation}
  d(r)=
  \begin{cases}
    4(1-r),&0\le r\le \frac{6}{7}\\ (1-\frac{r}{2}),& \frac{6}{7} <r \le 2
  \end{cases}
\end{equation}

One can show that optimization between all three access modes at each
$r$ cannot yield a better DMT under CF relaying, therefore the result
above cannot be improved upon. The proof is given in
Appendix~\ref{Appen:CF}.
\begin{figure*}
\centering \includegraphics{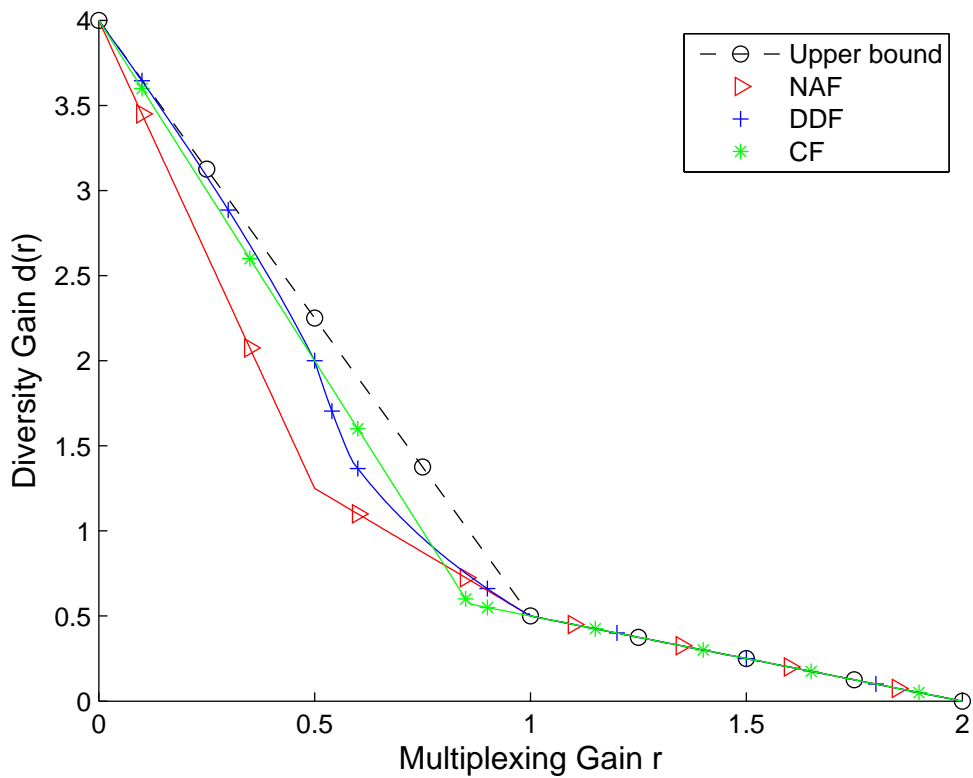}
\caption{Diversity multiplexing trade-off for a 2-pair shared relay
  channel, demonstrating the performance of various protocols.}
\label{fig:DMTSRC}
\end{figure*}
\begin{remark}
The trivial hybrid scheme of using the relay assisted modes at low
multiplexing gains and the direct links at high multiplexing links is
not always suboptimal. It is shown that for NAF and DDF, better
performance is achieved by considering the three access modes at low
multiplexing gains. For CF relaying, the non-relayed access mode is
not helping at low multiplexing gains, hence, the hybrid scheme
is optimal.
\end{remark}
\begin{remark}
Using the same technique used to prove the DMT of the orthogonal
opportunistic simple relay channel, Appendix~\ref{Appen:ODF}
and~\ref{Appen:OAF}, and Lemma~\ref{lemma:1}, one can show that the
DMT of the opportunistic shared relay channel under either orthogonal
AF or orthogonal DF is given by
\begin{equation}
d(r) = 2(1-2r)^+ + (1-r/2)^+,
\end{equation}
where the access modes are defined as before and the relay always
transmits orthogonal to the sources.
\end{remark}

To summarize the results for the opportunistic shared relay channel, a
brief comparison between three relaying protocols NAF, DDF, and CF is
as follows.  At low multiplexing gain the DDF outperforms NAF and CF.
At medium multiplexing gains, the relay does not have enough time to
fully forward the decoded message to the destination and the CF in
this case outperforms the DDF. At multiplexing gains above $1$, it
does not matter which relaying protocol is used since the
DMT-optimal strategy uses direct (non-relayed) mode.


\section{Opportunistic Multiple Access and Broadcast Relay Channels}
\label{sec:MARC}

\begin{figure}
\centering
\includegraphics{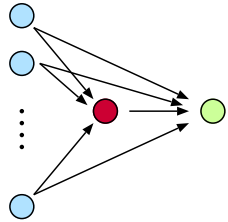}
\caption{The multiple access relay channel.}
\label{fig:marc}
\end{figure}

The multiple access relay channel (MARC)~\cite{Kramer2000} consists of
the standard multiple access channel together with one relay (see
Figure~\ref{fig:marc}).  No results for the DMT of the opportunistic
MARC have been available until now, but its {\em non-opportunistic}
DMT under superposition coding with single-antenna nodes is analyzed
in~\cite{Azarian2008,chen2006,chen2007,Yuksel2007}. The following
results are known for the non-opportunistic MARC: It is known that the
dynamic decode and forward is DMT optimal for low multiplexing
gain~\cite{Azarian2008}. The compress and forward protocol achieves a
significant portion of the half duplex DMT upper bound for high
multiplexing gain~\cite{Yuksel2007} but suffers from diversity loss in
the low multiplexing regime. The multiple-access relay amplify and
forward (MAF) is proposed in~\cite{chen2007}, it dominates the CF and
outperform the DDF protocol in high multiplexing regime.

The broadcast relay channel (BRC) was introduced independently
in~\cite{Kramer2004} and~\cite{Liang2004}.  Assuming single-antenna
nodes, the {\em opportunistic} BRC is identical to the {\em
  opportunistic} MARC save for certain practicalities in the exchange
of channel state information, which does not make a difference at the
abstraction level of the models used in this paper. Therefore for the
demonstration purposes we focus on MARC; the results carry over to the
BRC directly.

\subsection{DMT Upper Bound}

In order to calculate a DMT upper bound for the opportunistic MARC, we
assume a genie gives the relay an error-free version of the messages
originating from all the sources. We also assume full cooperation on
the transmit side. Under these conditions, the source that maximizes
the instantaneous end-to-end mutual information is selected. Each of
the $n$ sources has an independent link to the destination and they
all share the same relay-destination link. The opportunistic modes are
demonstrated in Figure~\ref{fig:MARC_AM}. The genie-aided MARC is
equivalent to a MISO system with $n+1$ transmit antennas and one
receive antenna.

The performance of the {\em opportunistic} genie-aided MARC is
therefore upper bounded by a $(n+1)\times 1$ MISO system with antenna
selection that chooses for each codeword transmission two transmit
antennas. The $(n+1)\times 1$ antenna selection allows configurations
that do not have a counterpart in the opportunistic modes in the MARC
channel, therefore due to the extra flexibility the MISO system with
antenna selection upper bounds the performance of the genie-aided
opportunistic MARC channel.

The DMT of a $M\times N$ MIMO link with $L_t<M$ selected transmit
antennas and $L_r<N$ selected receive antennas is upper bounded by a
piecewise linear function obtained by connecting the following $K+2$
points~\cite{Jiang2007}
\begin{equation}
\label{eq:Ant_sel}
  \Big\{\Big(n,(M_r-n)(M_t-n)\Big)\Big\}_{n=0}^{K},\big(\min(L_r,L_t),0\big),
\end{equation}
where
\begin{align*}
  K=&\arg\min_{k\in \mathbb
    Z}\frac{(M_r-k)(M_t-k)}{\min(L_r,L_t)-k},\\
  &\text{subject to } \;0\le k\le \min(L_r,L_t)-1
\end{align*}
Using this result, a $(n+1)\times 1$ MISO system with two selected
transmit antennas has a DMT that is upper bounded by
\begin{equation}
d(r)=(n+1)(1-r)^+.
\end{equation}
This represents our genie-aided upper bound for opportunistic MARC.
\begin{figure*}
\centering \includegraphics[width=5in]{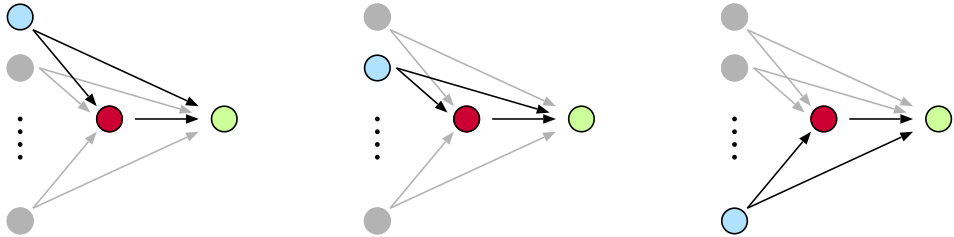}
\caption{Opportunistic access modes for the genie-aided multiple
  access relay channel.}
\label{fig:MARC_AM}
\end{figure*}

\subsection{Achievable DMT}

In this section, we propose a node selection rule and calculate the
corresponding achievability results for a number of relaying protocols
in opportunistic MARC and BRC. As mentioned earlier, one of the
difficulties in the computation of DMT in opportunistic scenarios is
the dependencies among the statistics of the node selections, which
itself is a result of selection rules. To circumvent these
difficulties, we propose a selection rule that relies only on the
source-destination links in the MARC. Because this method does not
observe the shared link in the system, the resulting node statistics
are independent and many of the computational difficulties
disappear. 

We shall see that this simplified selection works surprisingly well in
the high-SNR regime.
It will be shown that for some relaying protocols this selection
algorithm yields achievable DMT that is tight against the upper
bound. 


The proposed schemes for the MARC can be also be used for the BRC,
therefore for demonstration purposes we limit ourselves to MARC. The
only difference is that for the BRC the CSI must be fed back to the
source to make the scheduling decision.

\subsubsection{Orthogonal Amplify and Forward}

The maximum instantaneous mutual information between the inputs and
the output is
\begin{equation}
I_{AF}=\frac{1}{2} \log\big
(1+\rho|h_{{i^*}d}|^2+f(\rho|h_{{i^*}r}|^2,\rho|h_{rd}|^2)\big),
\end{equation}
where ${i^*}=\arg\max_i|h_{id}|$. The outage probability is given by
\begin{align}
P_{AF}&=\prob \big(I_{AF}<r\log\rho\big)\nonumber\\
        &=\prob \bigg(|h_{{i^*}d}|^2+\frac{1}{\rho}f(\rho|h_{{i^*}r}|^2,\rho|h_{rd}|^2)<\frac{\rho^{2r}-1}{\rho}\bigg)\label{eq:out_oaf}.
\end{align}
Since channel coefficients $\hij$ are complex Gaussian,
$|\hij|^2$ obey exponential distributions. We therefore use the
following result to characterize~(\ref{eq:out_oaf}) in the high-SNR
regime.

\begin{lemma}
\label{lemma:4}
Assume random variables $u_i$, $v$ and $w$ follow exponential distributions with
parameters $\lambda_u,\lambda_v$ and $\lambda_w$, respectively, and
$\epsilon$ is a constant and $f(x,y)=\frac{xy}{x+y+1}$.
\begin{align}
\label{eq:res1}
\lim_{\rho\rightarrow\infty}\frac{1}{\Big(\frac{\rho^{2r}-1}{\rho}\Big)}\,\prob \bigg(u_i<\frac{\rho^{2r}-1}{\rho}\bigg)&=\lambda_u\\
\label{eq:res2}
\lim_{\rho\rightarrow \infty}\frac{1}{\Big(\frac{\rho^{2r}-1}{\rho}\Big)^n}\,\prob \bigg(\max_i
u_i<\frac{\rho^{2r}-1}{\rho}\bigg)&=\lambda_u^n\\
 \label{eq:res3}
\lim_{\rho\rightarrow\infty}\frac{1}{\Big(\frac{\rho^{2r}-1}{\rho}\Big)^{n+1}}\,\prob \bigg(\max_i u_i+v<\frac{\rho^{2r}-1}{\rho}\bigg)&=\frac
{\lambda_v\lambda_u^n}{n+1},\\
\lim_{\rho\rightarrow \infty} \frac{1}{\Big(\frac{\rho^{2r}-1}{\rho}\Big)^{n+1}}\,\prob \bigg(\max_i
u_i+ f(\frac{v}{\epsilon},\frac{w}{\epsilon}) <\frac{\rho^{2r}-1}{\rho}\bigg)
&=\frac{\lambda_u^n(\lambda_v+\lambda_w)}{2},\label{eq:result1}
\end{align} 
\end{lemma}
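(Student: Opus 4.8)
The plan is to treat the four asymptotic identities as successive refinements of a single elementary fact: if $u$ is exponential with parameter $\lambda_u$, then $\prob(u < x) = 1 - e^{-\lambda_u x} = \lambda_u x + o(x)$ as $x \to 0$. Since $\frac{\rho^{2r}-1}{\rho} \to 0$ for $r < 1/2$ (the regime of interest for these outage events), each left-hand side is a ratio of a small probability to the appropriate power of this vanishing quantity, and the claims are statements that the leading Taylor coefficient is as stated. First I would establish \eqref{eq:res1} directly from the Taylor expansion of $1 - e^{-\lambda_u x}$. Then \eqref{eq:res2} follows immediately by independence: $\prob(\max_i u_i < x) = \prod_i (1 - e^{-\lambda_u x}) = (\lambda_u x)^n + o(x^n)$, so dividing by $x^n$ and letting $x = \frac{\rho^{2r}-1}{\rho} \to 0$ gives $\lambda_u^n$.

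For \eqref{eq:res3} the key step is to write the probability as an integral over $v$:
\begin{equation}
\prob\Big(\max_i u_i + v < x\Big) = \int_0^x \lambda_v e^{-\lambda_v t}\,\prob\Big(\max_i u_i < x - t\Big)\,dt = \int_0^x \lambda_v e^{-\lambda_v t}\big(1 - e^{-\lambda_u(x-t)}\big)^n\,dt.
\end{equation}
Substituting $t = xs$ and expanding $e^{-\lambda_v xs} \to 1$ and $(1 - e^{-\lambda_u x(1-s)})^n \approx (\lambda_u x(1-s))^n$ as $x \to 0$, the integral becomes $\lambda_v \lambda_u^n x^{n+1}\int_0^1 (1-s)^n\,ds + o(x^{n+1}) = \frac{\lambda_v \lambda_u^n}{n+1} x^{n+1} + o(x^{n+1})$, which after dividing by $x^{n+1}$ yields the claim. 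One should be slightly careful to justify interchanging the limit with the integral — dominated convergence with a crude bound such as $(1 - e^{-\lambda_u x(1-s)})^n \le (\lambda_u x(1-s))^n$ suffices.

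The main obstacle is \eqref{eq:result1}, because of the nonlinear function $f(v/\epsilon, w/\epsilon) = \frac{vw/\epsilon^2}{v/\epsilon + w/\epsilon + 1}$. The approach here is to condition on $v$ and $w$ and observe that for the event to have nonvanishing probability we need $\max_i u_i$ small, hence (for fixed $\epsilon$) the term $f(v/\epsilon, w/\epsilon)$ must also be forced small; but $f$ is small only when $\min(v,w)$ is small, specifically $f(v/\epsilon,w/\epsilon) \le \min(v,w)/\epsilon$. The cleanest route is to split on whether $v < w$ or $w < v$: on $\{v < w\}$, $f(v/\epsilon, w/\epsilon) \approx v/\epsilon$ to leading order when $v$ is small while $w$ stays $\Theta(1)$ (since the denominator $\to w/\epsilon$), so that branch contributes $\frac{1}{2}$ of a term behaving like $\prob(\max_i u_i + v' < x)$ with $v'$ effectively exponential-like near $0$ with density $\to \lambda_v$ at the origin; symmetrically for $\{w < v\}$. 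Carrying out the integral as in \eqref{eq:res3} — now a double integral over $(v,w)$, with the leading contribution coming from the region where one of them is $O(x)$ and the other is bounded away from $0$ — produces $\frac{\lambda_u^n}{n+1}(\lambda_v + \lambda_w) \cdot (n+1)/2$... and here I would need to recheck the combinatorial constant: the stated answer is $\frac{\lambda_u^n(\lambda_v+\lambda_w)}{2}$, which suggests that unlike \eqref{eq:res3} there is no $\frac{1}{n+1}$ factor, meaning the relevant convolution is against a point-mass-like contribution rather than a uniformly spread one. The resolution is that when, say, $v = O(x)$ and $w = \Theta(1)$, the constraint is $\max_i u_i < x - f \approx x - v/\epsilon$, and integrating $\lambda_v e^{-\lambda_v v}$ against $(\lambda_u(x - v/\epsilon))^n$ over $v \in (0, \epsilon x)$ with substitution $v = \epsilon x s$ gives $\epsilon \lambda_v \lambda_u^n x^{n+1} \int_0^1 (1-s)^n ds$ — still a $\frac{1}{n+1}$ — so in fact I expect the correct statement should carry a $\frac{1}{n+1}$ as well, and verifying whether the paper's constant or my derivation is right is precisely the delicate point I would work through carefully before finalizing the proof.
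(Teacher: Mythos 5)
Your overall route---Taylor-expanding the exponential cdf at the origin and evaluating the convolution $\int_0^x f_V(t)\,\big(1-e^{-\lambda_u(x-t)}\big)^n\,dt$ with the substitution $t=xs$, where $x=\frac{\rho^{2r}-1}{\rho}\to 0$ for $r<1/2$---is precisely the ``slight modification'' of the argument in~\cite{Laneman2004} that the paper intends; the paper itself supplies no details beyond citing that reference for~(\ref{eq:res1}). Your treatment of~(\ref{eq:res1})--(\ref{eq:res3}) is complete and correct, and the dominated-convergence remark disposes of the only technical point.

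On the issue you flag in~(\ref{eq:result1}): your derivation is the correct one, and the stated constant is wrong for $n>1$. The quantity that actually appears in the application, Equation~(\ref{eq:out_oaf}), is $\frac1\rho f(\rho v,\rho w)=\frac{vw}{v+w+1/\rho}\to\frac{vw}{v+w}$, whose cdf near the origin is $(\lambda_v+\lambda_w)t+o(t)$ (the harmonic-mean fact from~\cite{Seddik2006} that the paper invokes in Appendix~\ref{Appen:OAF}). It therefore enters the convolution exactly as an exponential variable of parameter $\lambda_v+\lambda_w$ would, the same Beta integral $\int_0^1(1-s)^n\,ds=\frac1{n+1}$ appears, and the limit is $\frac{\lambda_u^n(\lambda_v+\lambda_w)}{n+1}$. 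This coincides with the stated $\frac{\lambda_u^n(\lambda_v+\lambda_w)}{2}$ only for $n=1$; there is no ``point-mass'' mechanism that could suppress the $\frac1{n+1}$. The discrepancy is harmless downstream, since the lemma is only used through $P_{AF}\doteq(\cdot)\big(\frac{\rho^{2r}-1}{\rho}\big)^{n+1}$ and $\doteq$ discards constants, so $d(r)=(n+1)(1-2r)^+$ is unaffected.

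One step of your argument for~(\ref{eq:result1}) needs slightly more care: the split into $\{v<w\}$ and $\{w<v\}$ with ``the other variable of order one'' ignores the region where $v$ and $w$ are both small. Writing $\big\{\frac{vw}{v+w}<t\big\}=\big\{(v-t)(w-t)<t^2\big\}$, the boundary events $\{v\le t\}$ and $\{w\le t\}$ contribute the $(\lambda_v+\lambda_w)t$ term, while the residual event $\{v>t,\,w>t,\,(v-t)(w-t)<t^2\}$ has probability $O(t^2\log(1/t))=o(t)$ by memorylessness, so your leading term survives. Note, however, that if one reads the lemma literally with $\epsilon$ a fixed constant, the analogous residual event has probability $\Theta(t\log(1/t))$, which dominates the linear term and makes the limit in~(\ref{eq:result1}) diverge; the statement is only meaningful under the scaling $\epsilon=1/\rho$ implicit in Equation~(\ref{eq:out_oaf}), which is evidently what was intended.
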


\begin{proof}
Expression~(\ref{eq:res1}) is proved in~\cite{Laneman2004}. The proof of the
other expressions is similar (with slight modifications)
and is omitted for brevity.
\end{proof} 

From~(\ref{eq:out_oaf}) and~(\ref{eq:result1}),
the probability of outage at high SNR is
\begin{align}
P_{AF}&\doteq \frac{1}{2}\lambda_{i^*d}^n\big(\lambda_{i^*r}+\lambda_{rd}\big)\bigg(\frac{\rho^{2r}-1}{\rho}\bigg)^{n+1}
\end{align}
where $\lambda_{i^*r}, \lambda_{rd}, \lambda_{i^*d}$ are the
exponential parameters of the channel gains for the links
corresponding to the selected opportunistic mode. It follows that the
DMT of the opportunistic $n$-user MARC with orthogonal
amplify-and-forward, under a selection rule based on the
source-destination channel gain, is given by
\begin{equation}
d(r)=(n+1)(1-2r)^+.
\end{equation}

\subsubsection{Orthogonal Decode and Forward}

With the orthogonal DF protocol, outage happens if either of the
following two scenarios happen: (1) the relay cannot decode and the
direct source-destination channel is in outage, or (2) the relay can
decode but the source-destination and relay-destination links together
are not strong enough to support the required rate. In other words:
\begin{align}
P_{DF}&=\prob \bigg(|h_{{i^*}r}|^2\ge\frac{\rho^{2r}-1}{\rho}\bigg)\prob \bigg(
|h_{{i^*}d}|^2+|h_{rd}|^2<\frac{\rho^{2r}-1}{\rho}\bigg)\nonumber\\ 
&\qquad +\prob \bigg(|h_{{i^*}r}|^2<\frac{\rho^{2r}-1}{\rho}\bigg)\prob \bigg(
|h_{{i^*}d}|^2<\frac{\rho^{2r}-1}{\rho}\bigg)\nonumber\\&\doteq \prob \bigg(|h_{{i^*}r}|^2\ge\rho^{2r-1}\bigg)\prob \bigg(
|h_{{i^*}d}|^2+|h_{rd}|^2<\rho^{2r-1}\bigg)\nonumber\\ 
&\qquad+\prob \bigg(|h_{{i^*}r}|^2<\rho^{2r-1}\bigg)\prob \bigg(
|h_{{i^*}d}|^2<\rho^{2r-1}\bigg).
\end{align}

Using Lemma~\ref{lemma:4} (specifically
equations~(\ref{eq:res1}),~(\ref{eq:res2}),~(\ref{eq:res3})) the
outage probability can be approximated thus:
\begin{align}
P&_{DF}\doteq \bigg(\frac{\lambda_{i^*d}^n
\lambda_{rd}}{n+1}+\lambda_{i^*d}^n \lambda_{rd}\bigg)
\rho^{(n+1)(2r-1)}
\end{align}
It follows directly that the $n$-user opportunistic MARC,
subject to selection based on source-destination channel gains and
operating with orthogonal DF, has the following DMT
\begin{equation}
d(r)=(n+1)(1-2r)^+.
\end{equation}

\begin{remark}
We know that an orthogonal relay may not be helpful in high
multiplexing gains, but the above orthogonal MARC dedicates time to
the relay, therefore it may be improved. To do that, we add to the
system $n$ {\em unassisted} modes, where the relay does not play a
role. For an opportunistic MARC that can choose between $2n$
opportunistic modes, one can show that the maximum achieved DMT is
$d(r)=n(1-r)^++(1-r/2)^+$. A simple selection rule achieves this DMT:
take the best source-destination link. If it is viable without the
relay, use it without relay, otherwise use it with the relay.
\end{remark}

\subsubsection{Non-Orthogonal Amplify and Forward}
In this protocol, the source with the maximum source-destination
channel coefficient is selected. Recall that the index of this source
is denoted $i^*$. This source continues transmitting throughout the
transmission interval.


The DMT of the MARC with n sources and opportunistic channel access
{\em based on the source-destination channel gain} using NAF relaying
is
\begin{equation}
\label{eq:MDT_OAF}
d(r)=n(1-r)+(1-2r)^+.
\end{equation}
This result indicates that at multiplexing gains $r> 0.5$ the relay
does not play any role; the only available diversity at $r>0.5$ is
that of multiuser diversity generated by selection among $n$ sources.

To prove the result, we make use of the calculation method
in~\cite{Zheng2003,Azarian2005}.  An outline of the proof is as
follows. We assume that $v_1$ is the exponential order of the random
variable $\frac{1}{|h_{{i^*}d}|^2}$, i.e.
\begin{equation}
v_1=-\frac{\log(|h_{{i^*}d}|^2)}{\log{\rho}}.
\end{equation}
The probability density function of the exponential order is
\begin{equation}
p_v=n\ln(\rho) \rho^{-v}e^{-\rho^{-v}}(1-e^{-\rho^{-v}})^{n-1},
\end{equation}
which, asymptotically,
\begin{equation}
p_v\doteq \begin{cases} 0, & v<0 \\\rho^{-nv}, & v\ge0. \end{cases}
\end{equation}
The probability of outage can be characterized by $P_O\doteq
\rho^{-d_o}$ where
\begin{equation}
d_o=\inf_{(v_1,v_2,u) \in O^+}nv_1+v_2+u,
\end{equation}
where $v_2$ and $u$ are the exponential order of $1/|h_{{i^*}r}|^2$ and
$1/|h_{rd}|^2$, respectively. The set $O$ characterizes the outage event
and $O^+$ is $O\bigcap R^{3+}$. Optimization problems of this form
have been solved in~\cite{Azarian2005} and also in the context of
opportunistic relay networks we have demonstrated a solution in
Appendix~\ref{Appen:SRC_NAF} for the shared relay channel, therefore
we omit a similar solution here in the interest of brevity. 

\begin{figure*}
\centering
\includegraphics[width=5.5in]{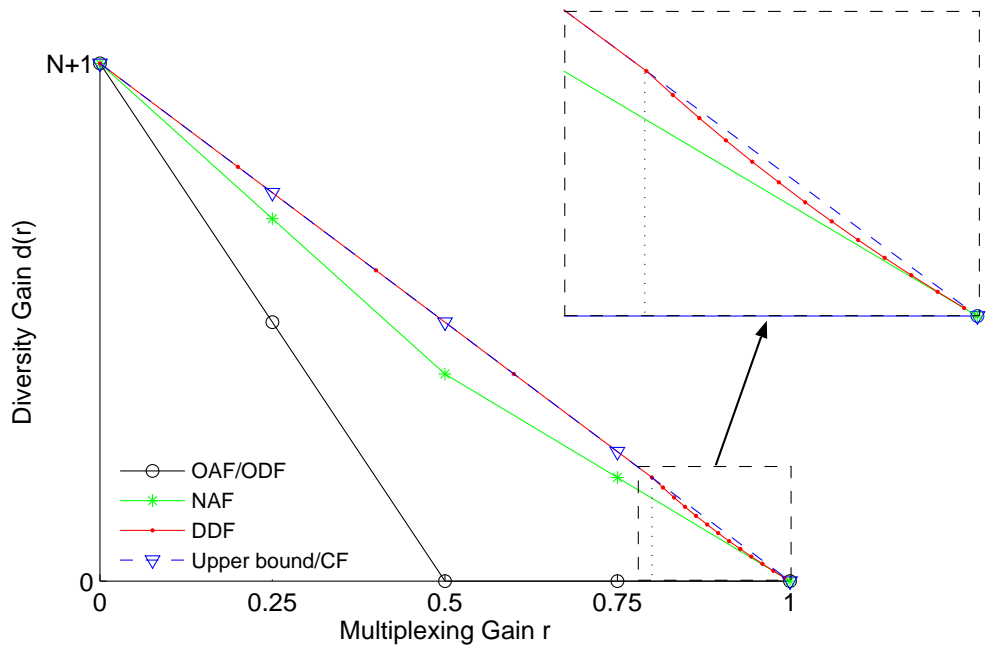}
\caption{DMT for a N-user opportunistic multiple-access relay
  channel. The insert shows the high-multiplexing gain region.}
\label{fig:DMT}
\end{figure*}

\subsubsection{Dynamic Decode and Forward}

The DMT of the opportunistic DDF MARC, where the selection is based on
the source-destination channel gain, is given by
\begin{equation}
d(r)=\begin{cases} (n+1)(1-r),&\frac{n}{n+1}\ge r\ge 0\\
n\frac{(1-r)}{r},&1\ge r \ge \frac{n}{n+1}.\end{cases}
\end{equation}
The proof, which is omitted for brevity,
follows~\cite{Zheng2003,Azarian2005} together with the basic Lemmas of
this paper and the NAF MARC proof. The DDF achieves the optimal trade-off (the genie-aided
DMT) for $\frac{n}{n+1}\ge r\ge 0$. For multiplexing gains
$r>\frac{n}{n+1}$ the relay does not have enough time to perfectly
help the selected source. However, as $n$ grows, the DMT approaches
the upper bound (genie-aided).

\subsubsection{Compress and Forward}
The node selected by the opportunistic
algorithm has index $i^*$. The system will be in outage if the
transmission rate $r\log\rho$ is less than the instantaneous mutual
information $I(x_{i^*};\hat{y_r},y_d|x_r)$, where $\hat{y_r}$
represents the compressed signal at the relay, $y_r$ and $y_d$ are the
received signals at the relay and the destination, respectively, and
$x_{i^*}$ and $x_r$ are the source and relay transmitted signals,
respectively.  Using selection scheme based on the direct link only
and applying the same techniques as in~\cite{Yuksel2007}, it
follows that the CF protocol achieves the following DMT
\begin{equation}
d(r)=\min\big(d_{BC}(r),d_{MAC}(r)\big),
\label{eq:DMTCF}
\end{equation}
where $d_{BC},d_{MAC}$ correspond to the outage of broadcast and MAC
cutsets, as follows:
\begin{align}
d_{BC}(r)& \defeq -\lim_{\rho\rightarrow\infty}\frac{\underset{p(x_{i^*},x_r)} 
{\min}\prob(I(x_{i^*};y_r y_d|x_r)<
r\log\rho)}{\log\rho}\nonumber\\
&=-\lim_{\rho\rightarrow\infty}\frac{\prob\big(\log\big|I+\rho\, H_{BC}H^\dagger_{BC}|<r\log\rho\big)}{\log\rho}\label{eq:dS_RD}\\
d_{MAC}(r)& \defeq -\lim_{\rho\rightarrow\infty}\frac{\underset{p(x_{i^*},x_r)}
{\min}\prob(I(x_{i^*}x_r;y_d)<
r\log\rho)}{\log\rho}\nonumber\\
&=-\lim_{\rho\rightarrow\infty}\frac{\prob\big(\log\big|I+2\rho\, H_{MAC}H^\dagger_{MAC}\;\big|<
r\log\rho\big)}{\log\rho},\label{eq:dSR_D}
\end{align}
The transmit signals $x_{i^*}$ and $x_r$ are from random codebooks that
are drawn according to complex Gaussian distributions with zero mean
and variance $\sqrt{\rho}$. We define
$H_{BC}\defeq\left[\begin{array}{c}h_{i^*r}\\h_{i^*d}\end{array}\right]$,
$H_{MAC}\defeq[h_{i^*d}\ h_{rd}]$ and $(\ )^\dagger$ denotes the
Hermitian operator. The derivation of
Equations~(\ref{eq:dS_RD}),~(\ref{eq:dSR_D}) uses the fact that a
constant scaling in the transmit power does not change the
DMT~\cite{Zheng2003}.

Using the techniques in~\cite{Zheng2003,Azarian2005} and following the
NAF MARC DMT proof, it is possible to calculate the following:
\begin{align}
d_{BC}(r)&=
\begin{cases} (n+1)-\frac{r}{t}, & r \le t< \frac{1}{n+1}\\
            n\frac{(1-r)}{(1-t)}, & t<\min(r,\frac{1}{n+1})\\
            (n+1)(1-r), &
            t\ge\frac{1}{n+1}\end{cases}\label{eq:DMTCF1}\\           
d_{MAC}(r)&=
\begin{cases}  (n+1)-\frac{r}{1-t}, & 1-r \ge t>\frac{n}{n+1} \\
            n\frac{(1-r)}{t}, & t>\max\{1-r,\frac{n}{n+1}\}\\
            (n+1)(1-r),&t\le\frac{n}{n+1}\end{cases}\label{eq:DMTCF2}
\end{align}
Details of the derivation are similar to, e.g.,
Theorem~\ref{theorem:SRC-DDF} and are omitted for brevity.

From Equations~(\ref{eq:DMTCF})-(\ref{eq:DMTCF2}), it follows that the
genie aided DMT upper bound can be achieved for any value of
$\frac{1}{n+1} \le t \le \frac{n}{n+1}$. The maximum achieved DMT is
given by
\begin{equation}
d(r)=(n+1)(1-r)^+
\label{eq:DMT-MARC-CF}
\end{equation}

\subsection{Optimality of the Achievable DMTs}
Although the previous DMTs were calculated using simplified selection
schemes that only observed the source-destination direct link, one can
show that for each of the relaying protocols, no improvement in DMT is
possible by more sophisticated selection schemes.

This fact is self-evident for the CF relaying result, since it meets
the genie-aided upper bound. The NAF and DDF do not meet the
genie-aided bound, therefore it is not obvious that they perform
optimally under the simplified selection scheme. We now proceed to
investigate this question for DDF and NAF.

The DMT of the multiple access relay channel with opportunistic user
selection is given by
\begin{align}
  d(r)=\lim_{\rho\rightarrow \infty}\frac{\log\prob({\cal
      O}_1,\ldots,{\cal O}_n)}{\log\rho},
\end{align}
where ${\cal O}_i$ represents the outage event for the access mode
characterized by source $i$ transmitting to the destination with the
help of the relay.

In a manner similar to~\cite{Azarian2005} and Equation~(\ref{eq:outgen}), the
probability of outage $\prob({\cal O}_1,\ldots,{\cal O}_n)$ can be
expressed as follows
\[
\prob({\cal O}_1,\ldots,{\cal O}_n)\doteq \rho^{-d_o(r)}, 
\]
where
\begin{align}
\label{eq:opt_MARC}
  d_o(r)=\underset{(v_1^{(1)},u^{(1)},\ldots,v_1^{(n)},u^{(n)},v_2)\in O
    }{\inf}v_2+\sum_{j=1}^{n}\big(v_1^{(j)}+u^{(j)}\big) \; .
\end{align}
The random variables $v_1^{(j)},u^{(j)}$  and $v_2$ represent the exponential order of
$1/|h_{jd}|^2,\; 1/|h_{jr}|^2$ and $1/|h_{rd}|^2$, respectively. Each
of these random variables has a probability density function that is
asymptotically equal to
\begin{align}
  p(x)\doteq
  \begin{cases}
    0& x<0\\ \rho^{-x} &x\ge 0.
  \end{cases}
\end{align}
The set $O$ represents the outage event for the opportunistic
network. We know $O={\cal O}_1^+\cap\ldots\cap{\cal O}_n^+$, i.e, the
opportunistic system is considered in outage when no access mode is
viable.

For NAF the outage region is defined by~\cite{Azarian2005}
\begin{align}
\label{eq:O_j_NAF}
  {\cal O}_j^+=\Big\{\big(v_1^{(j)},v_2,u^{(j)}\big)\in R^{3+}\Big|(l-2m)(1-v_1^{(j)})^++m \max\{2(1-v_1^{(j)}),1-(v_2+u^{(j)})\})^+<rl\Big\},
\end{align}
where m is rank of the relay amplification matrix and $l$ is the block
length. The solution to Equations~(\ref{eq:opt_MARC})
and~(\ref{eq:O_j_NAF}) is facilitated by the knowledge that $d_o(r)$
is maximized when $m=l/2$, leading to:
\begin{align}
  d_{NAF}(r)=n(1-r)+(1-2r),
\end{align}
This is the best diversity obtained for NAF, which is similar to the
simplified selection based on the source-destination link. Therefore
the optimality of the simplified selection rule is established for NAF.

For DDF the outage region is defined by~\cite{Azarian2005}
\begin{align}
\label{eq:O_j_DDF}
  {\cal O}_j^+=\Big\{\big(v_1^{(j)},v_2,u^{(j)}\big)\in R^{3+}\Big|t^{(j)}(1-v_1^{(j)})^++(1-t^{(j)}) (1-\min(v_1^{(j)},v_2))^+<r\Big\},
\end{align}
where $t^{(j)}$ is the listening-time ratio of the half-duplex relay
when source $j$ is transmitting, with $r\le t^{(j)}\le 1$. In the
following we outline the solution of
Equations~(\ref{eq:opt_MARC}) and~(\ref{eq:O_j_DDF}) for a two-user
MARC. The generalization to $n$ users is straight forward. 

Our strategy for solving the optimization problem is to partition the
optimization space into eight regions, solve the optimization problem
over each region as a function of $t^{(1)}$ and $t^{(2)}$, maximize
over $t^{(1)}$ and $t^{(2)}$ and then find the minimum of the eight
solutions. The eight regions correspond to the Cartesian product of
whether each of the three positive variables $v_1^{(1)},v_1^{(2)},
v_2$ is greater than or less than 1. 
Following the calculations, which are straight forward,
the DMT for DDF is
\begin{align}
  d_{DDF}(r)=
  \begin{cases}
    (n+1)(1-r)& \frac{n}{n+1}\ge r\ge 0\\
    n\frac{1-r}{r}& 1\ge r >\frac{n}{n+1},
  \end{cases}
\end{align}
which matches the DMT of simplified selection based on the
source-destination links. Therefore the optimality of simplified
selection for the DDF is established.

We can follow essentially the same steps for the broadcast relay
channel and obtain the same DMTs for both the NAF and DDF. The
optimization problem in the broadcast case is slightly different: the
shared link in BRC is the source-relay channel while it is the
relay-destination channel in the MARC. Nevertheless, very similar
strategies follow through for the BRC with only small adjustments.

%
\section{Opportunistic X-Relay Channel}
\label{sec:XRC}
The X-relay channel is defined as a $n\times n$ node wireless network
with a relay, where each of the $n$ sources has messages
for each of the $n$ destinations (see Figure~\ref{fig:XRC}). The
sources are not allowed to cooperate with each other, but the relay
cooperates with all sources.

\begin{figure}
\centering
\includegraphics{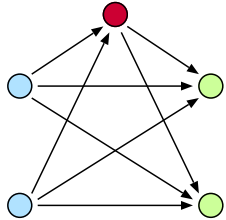}
\caption{The X-relay channel.}
\label{fig:XRC}
\end{figure}
There are only a few results available on the X channel, among them,
it has been shown~\cite{Jafar2008} that the X-channel with no relay
has exactly $\frac{4}{3}$ degrees of freedom when the channels vary
with time and frequency. The X-relay channel introduces a relay to the
X channel for improved performance.

The opportunistic X-relay channel has four access modes as shown in
Figure~\ref{fig:XRC1}. These modes avoid interference across different
message streams and satisfy our working definition of opportunistic
modes in relay networks.

\subsection{DMT upper bound}

To find an upper bound for the DMT of opportunistic X-relay channel,
we assume a genie transfers the data from the sources to the relay and
also allows the sources to know each other's messages. For the upper
bound we also allow the destinations to fully cooperate, noting that
it can only improve the performance.  Figure~\ref{fig:XRC2} shows the
genie-aided opportunistic modes, where the two-sided arrows indicate
the free exchange of information by the genie. From this figure, it is
easy to see that the genie-aided X-relay channel is equivalent to a
MIMO system with $3$ transmit antennas and $2$ receive antennas.

The performance of the {\em opportunistic} X-relay channel is therefore
upper bounded by a $3\times 2$ MIMO system with {\em antenna selection},
choosing for each codeword two transmitting and one receiving
antennas. It is noteworthy that the $3\times 2$ antenna selection allows
one configuration that does not have a counterpart in the opportunistic
modes in the X-relay channel, therefore due to the extra flexibility the
MIMO system with antenna selection upper bounds the performance of the
genie-aided opportunistic X-relay channel.

Using the result from Equation~(\ref{eq:Ant_sel}), a $3\times 2$ MIMO system
with two antennas selected from the transmitter side and one antenna
selected from the receiver side has a DMT that is upper bounded by
$d(r)=6(1-r)^+$. This in turn is an upper bound to the performance of
the opportunistic X-relay channel.

\begin{figure*}
\begin{minipage}[b]{3.0in}
\centering 
\includegraphics[height=2in]{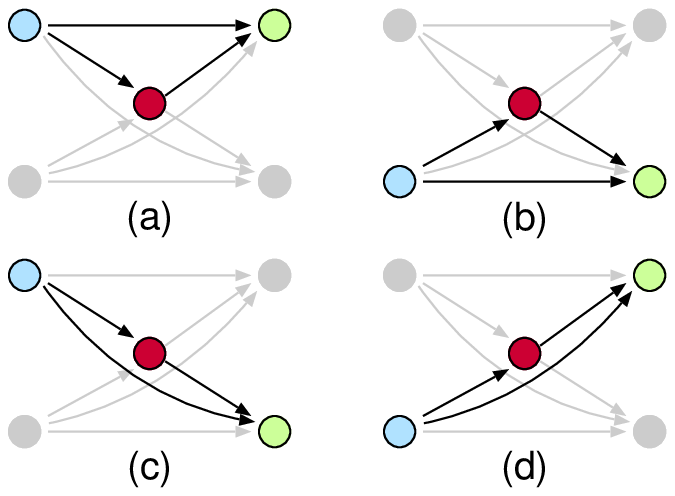}
\caption{Opportunistic modes of the X-relay channel}
\label{fig:XRC1}
\end{minipage}
\hfill
\begin{minipage}[b]{3.0in}
\centering 
\includegraphics[height=2in]{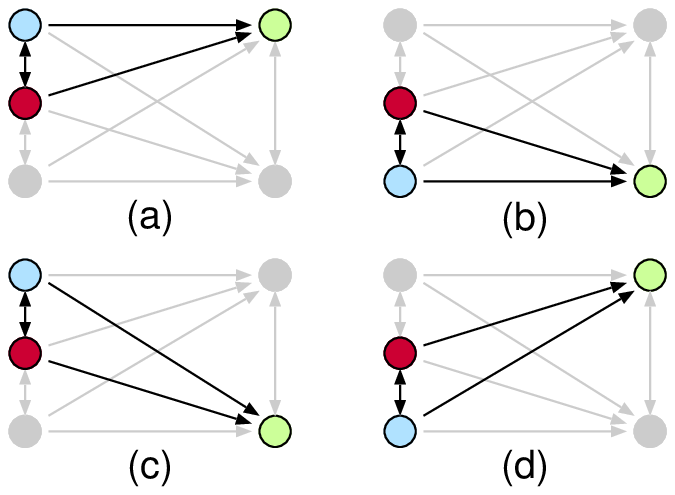}
\caption{Opportunistic modes of  genie-aided X-relay channel.}
\label{fig:XRC2}
\end{minipage}
\end{figure*}

\subsection{Achievable DMT}

For deriving achievable rates, we consider the following simplified
opportunistic scheme. First, we choose between the two access modes
(a) and (b) in Figure~\ref{fig:XRC1}. If both these two modes are in
outage, we consider {\em only} the direct link of the two access modes
(c) and (d), i.e., the relay is not allowed to cooperate in modes (c)
and (d). Note that this is only a simplification for the purposes of
achievable-DMT analysis, the idea being that if the relay is useful in
neither of the access modes (a) and (b), it is unlikely to be useful
at all. The approximation involving the conditional removal of the
relay from (c) and (d) allows the access modes to become independent
and simplifies the analysis. The resulting achievable rate is tight
against the upper bound for compress-forward, as seen in the sequel,
but not demonstrably so for other protocols.

Access modes (a) and (b) do not share any common links, therefore
their statistics are independent. Each of them is an ordinary relay
channel which can achieve $d(r)=2(1-r)^+$ via the CF
protocol~\cite{Yuksel2007}. The (c) and (d) access modes, which were
reduced to a single link, each achieves the DMT
$d(r)=(1-r)^+$. Furthermore, the source-destination links in (c) and
(d) are disjoint from the links in (a) and (b), therefore the
statistics are independent and we can use Lemma~\ref{lemma:1} to find
the overall DMT $d(r)=6(1-r)^+$. Note that this achievable DMT meets
the upper bound, therefore the DMT of the X-relay channel under CF is
exactly $d(r)=6(1-r)^+$.

Achievability results for relaying protocols other than CF can be
obtained along the same lines. We begin with NAF. Recall that the DMT
of a simple relay network (source, relay, destination) under NAF is
$d(r) = (1-r)^+ + (1-2r)^+$. Combining the four access modes (a),
(b), (c), (d) mentioned earlier for the X-relay channel together with
the NAF protocol results in:
\begin{align}
d_{XNAF}(r) &= 2(1-r)^+ + 2 \Big [ (1-r)^+ + (1-2r)^+\Big] \nonumber\\
&= 4(1-r)^+ + 2(1-2r)^+
\end{align}
A similar result exists for the DDF where the DMT is given by
\begin{align}
d_{XDDF}(r)=
\begin{cases}
  6(1-r)&0\le r<\frac{1}{2}\\
2\frac{1-r}{r}+2(1-r)& \frac{1}{2}\le r \le1
\end{cases}
\end{align}
Applying the same analysis to orthogonal AF and DF yields a diversity
$d(r)=2(1-r)^+ +4(1-2r)^+$, but there is more to be said for
orthogonal transmission. In orthogonal transmission it may be
beneficial at high multiplexing gains to shut down the relay, therefore a
complete analysis requires two more opportunistic modes that are
derived by shutting down the relay from modes (a) and (b).  Using this
extended set of six access modes, the DMT of the opportunistic X-relay
channel with orthogonal AF or orthogonal DF is
\begin{equation}
d(r)=4(1-r)^++2(1-2r)^+
\end{equation}
which matches the DMT of NAF. 

Thus far, to find achievable DMTs for the X relay channel we used
simplified selection rules and access modes. In the case of CF, this
simplified achievable DMT is in fact optimal since it matches the
genie upper bound. Other protocols do not meet the genie-aided bound,
therefore the question of the optimality of simplified selection for
other protocols is more involved. Nevertheless, for NAF and DDF also,
no DMT gains can be obtained by more sophisticated selection rules and
access modes, as outlined below.

To find the overall optimal DMT without the simplifications, we need
to solve a linear optimization problem similar to~(\ref{eq:opt_MARC})
where
\begin{align}
  d_o(r)=\underset{(v_1^{(ij)},v_2^{(rj)},u^{(jr)})\in O,\; i,j\in\{1,2\}
    }{\inf}\sum_{j=1}^{2}\Big(\sum_{i=1}^{2}v_1^{(ij)}+v_2^{(rj)}+u^{(jr)}\Big),
\end{align}
where $v_1^{(ij)},v_2^{(rj)}$ and $u^{(jr)}$ represent the exponential
order of $1/|h_{ij}|^2,\; 1/|h_{rj}|^2$ and $1/|h_{jr}|^2$,
respectively. The outage event $O$ is characterized by ${\cal
  O}_1^+\cap{\cal O}_2^+\cap{\cal O}_3^+\cap{\cal O}_4^+$, i.e., the
system is in outage if all access modes are in outage. The outage
event is given by Equation~(\ref{eq:O_j_NAF}) for NAF and
Equation~(\ref{eq:O_j_DDF}) for DDF. In a straight forward manner, the
optimization above gives the same DMTs found by the simplified
selection criterion, therefore the calculated DMTs cannot be improved
upon and are optimal.

%
%

\begin{figure*}
\centering
\includegraphics[width=5in]{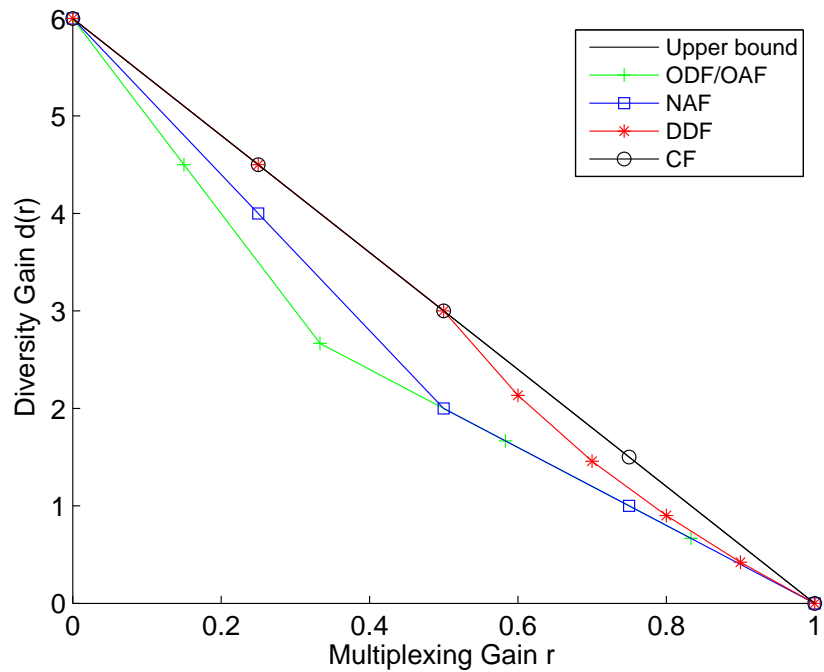}
\caption{The DMT of the opportunistic X-relay channel under orthogonal
  AF and DF, Non-orthogonal AF, Dynamic DF, and CF. The CF achieves the
  DMT upper bound.}
\label{fig:XRCDMT}
\end{figure*}

\section{The Gateway Channel}
\label{sec:GWC}
The gateway channel~\cite{Abouelseoud2008} is a multi-node network with
$M$ source-destination pairs that communicate with the help of a relay
(see Figure~\ref{fig:GWC}). Each source communicates only with its
corresponding destination. A two-hop communication scheme is used, where
at the first hop the sources transmit to the relay and at the second hop
the relay transmits to the destinations. No direct link exists between
the sources and destinations, therefore if the relay is in outage the
destination will surely be in outage. Under these conditions, the most
natural mode of operation is decode-and-forward, although
amplify-and-forward may also be considered due to practical
limitations. In this work we concentrate on the DF gateway channel.


We do not require data buffering at the relay. With an infinite buffer
at the relay, the gateway channel decomposes into a concatenation of a
MAC and a broadcast channel. An infinite buffer would thus simplify the
analysis but also increase the overall latency and relay complexity. One
of the interesting outcomes of the forthcoming analysis is that that
data buffering in the asymptotic high-SNR regime does not provide a
performance advantage (in the sense of DMT).

We start with the non-opportunistic gateway channel, and then move to
the opportunistic scheme.

\subsection{No Transmit CSI}

We first consider the case where all nodes have receive-side CSI, but
the nodes, and in particular the relay, do not have transmit-side
CSI. Under these conditions, we cannot choose source-destination pairs
according to their SNR. Then the choice of transmission strategies on
the MAC and broadcast side of the network are as follows.

On the broadcast side, the channel gains are random and unknown to the
relay. In light of symmetric rate requirements, the transmit strategy
must be symmetric with respect to the destinations. Under this
symmetry, the best achievable rate is according to orthogonal
transmission~\cite{tse2005} and superposition coding does not give
better results.

For the multiple-access side, under symmetric rate requirement, both
orthogonal and superposition channel access are viable. It has been
shown that superposition access gives slightly better performance at
medium SNR, while at high and low SNR the two methods have
asymptotically the same capacity under symmetric
rates~\cite[pp. 243-245]{tse2005}.

\begin{figure}
\centering \includegraphics{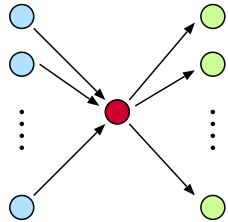}
\caption{The Gateway channel.}
\label{fig:GWC}
\end{figure}

In the absence of transmit-side CSI, and with symmetric rate
requirements, the network does indeed decompose into a cascade of a
multiple-access and a broadcast subnetworks, and the overall outage
probability is given by:
\begin{align}
P_{\cal O}&= 1- (1-P_{MAC})(1-P_{BC})\nonumber\\
&=P_{MAC}+P_{BC}-P_{MAC}P_{BC}.
\end{align}
Where $P_{MAC}$ (respectively $P_{BC}$) denotes the outage of the MAC
(respectively broadcast channel), defined as the probability that one
or more of the users in the MAC (respectively broadcast channel)
cannot support rate $R$. In a slowly fading environment, for a power
allocation vector $P_s=(P_1,\ldots,P_M)$, a fading state
$H=(h_{1r},\ldots,h_{Mr})$ and superposition coding, the outage is
given by $P_{MAC}=\prob \big(\bar R\notin {\cal C}_{MAC}\big)$ where
\begin{align}
{\cal C}_{MAC}(H,P)=\bigg \{ \bar R:\sum_{i\in S}R_i\le
\frac{1}{2}\log\bigg(1+\frac{1}{N}\sum_{i\in S}|\hir |^2P_i\bigg)\bigg\},
\end{align}
With a time-sharing MAC, the outage probability is
\begin{align}
P_{MAC}&=
\prob \bigg (\Big\{R_i>\frac{1}{2M} \log\big(1+\rho|\hir |^2\big), \;  i=1,\ldots,M\Big\}\bigg)
\end{align}

On the broadcast side, the outage is given by $P_{BC}=\prob \big(\bar R\notin
{\cal C}_{BC}\big)$, where \begin{equation} {\cal
    C}_{BC}=\bigg\{\bar R: R_i \le \frac{1}{2M}
  \log\big(1+\rho|h_{ri}|^2\big)\bigg\},
\end{equation}

Without transmit CSI, the DMT is the minimum of the DMT of the MAC and
the broadcast channel.  For the MAC channel, it has been
shown~\cite{Tse2004} that for multiplexing gains $r \le\frac{M}{M+1}$,
the diversity $d=1-r/M$ is achievable, while for higher rates
$\frac{M}{M+1} < r \le 1$, the diversity of $d=M(1-r)$ is obtained.

For the broadcast channel, since time sharing achieves the maximum
sum-rate bound, the broadcast DMT is similar to the single-user
DMT. The DMT of the network is bounded by the DMT of the broadcast
part of the network. Thus, including the half-duplex consideration,
the best achievable DMT is
\begin{equation}
d(r)=(1-2r)^+.
\end{equation}
The same DMT can be obtained with orthogonal channel access;
superposition coding has no effect on the DMT.

\subsection{Opportunistic Channel Access}

In this scenario, the relay is assumed to have channel state information
(either perfect or incomplete) about its incoming and outgoing
links. Using this information, during each transmission interval the
relay selects the best overall source-destination pair and gives it
access to the channel. Form Lemma~\ref{lemma:1}, it is easy to see that
the DMT of an opportunistic gateway channel is upper bounded by $d(r)\le
n(1-2r)^+$. We start by assuming perfect CSI at the relay.

\subsubsection{Full CSI at the Relay}

We start by defining
\[
\gamma_i \defeq  \min(|h_{ir}|^2,|h_{ri}|^2) \; .
\]
In the decode-and-forward protocol, end-to-end data transmission is
feasible if and only if both source-relay and relay-destination links
can support the desired rate, therefore $\gamma_i$ is the effective
channel gain that governs the rate supported by a DF protocol for any
node pair $i$. In the opportunistic mode, we would like to support the
maximum instantaneous rate, therefore user $i^*$ will be selected such
that:
\begin{equation}
i^*=\arg\max_i  \gamma_i,
\end{equation}
We now investigate the statistics of $\gamma_{i^*}$. Since the channel
fading coefficients $\hri $ and $\hir $ are complex Gaussian random
variables, the channel gains $|\hri |^2$ and $|\hir |^2$ obey
exponential distributions with exponential parameters $\frac{1}{E[|\hri
    |^2]}$ and $\frac{1}{E[|\hir |^2]}$, respectively. It is known that
the minimum of $M$ exponential random variables with parameters
$\lambda_k$ is an exponential random variable with parameter
$\sum_{k=1}^{M}\lambda_k$, therefore the pdf of $\gamma_i$ is an
exponential distribution with parameter $\lambda=2$. Therefore, the cdf
of the maximum SNR for all the source-relay-destinations links
$\gamma_{i^*}$ is
\begin{align}
F_{\gamma_{i^\ast}}(x)&=\left(1-e^{-2 x}\right)^{M}
\label{eq:gammaCDF}
\end{align}
The network is considered in outage when none of the source-destination
pairs can support the desired transmission rate $R$. The outage
condition is therefore:
\begin{align}
P_{\cal O}&=\prob 
\Big(R> \frac{1}{2}\log\big(1+\rho\gamma_{i^*})\Big)\nonumber\\ 
&=\prob \Big(\gamma_{i^*} < \frac{\rho^{2r}-1}{\rho}\Big) \nonumber\\  
&=\bigg(1-\exp\Big(-2\frac{\rho^{2r}-1}{\rho}\Big)\bigg)^{M}.
\end{align}
The block sizes in our analyses are large enough so that the error
events are dominated by outage events, therefore the probability of
error can be approximated by the outage probability. Using the Taylor
approximation $1-\exp(-x) \approx x$, we get:
\begin{align}
P_e&\doteq \Big ( \frac{\rho^{2r}-1}{\rho}\Big)^M\nonumber \\
&\doteq \rho^{-M(1-2r)} \; .
\end{align}
where the Taylor approximation is valid for $2r<1$. Hence, the
opportunistic gateway channel achieves the following DMT
\begin{equation}
d(r)=M(1-2r)^+.
\end{equation}

\begin{remark}
If the path selection criterion uses one set of channel gains,
i.e. either $\{\hir\}$ alone or $\{\hri\}$ alone, no diversity gain
would result. For example, selecting on the MAC side of the network
would give $\gamma_i=\min(|h_{i^*r}|^2,|h_{ri^*} |^2)$ where $i^*=\arg
\max|\hir|^2$. Since the channel gains on the two sides are independent,
$|h_{ri^*}|^2$ is still exponential and dominates the diversity order.
\end{remark}

\begin{remark}
The outage calculations assume that upon selection each source must be
connected to its corresponding destination within one transmission
interval, implying that no long-term storage and buffering is taking
place at the relay. In addition to simplifying the relay, this is also
helpful in terms of reducing the end-to-end delay due to opportunistic
communication.
\end{remark}

\begin{remark}
An infinite buffer at the relay may increase the throughput, but it does
not improve the DMT. If the relay can hold onto the data, the incoming
packets could wait indefinitely until the path to their destination is
dominant. Under this condition, the opportunistic MAC and opportunistic
broadcast operations can be performed independently, each giving rise to
a diversity $d=M(1-2r)^+$, thus the overall diversity would also be
$d=M(1-2r)^+$. However, this is no more than the diversity obtained
without the buffer.
\end{remark}

To summarize, a buffer would not improve the DMT, however, it would
allow us to achieve the optimal DMT via local decision making (using
MAC information on the MAC side, and broadcast channel information on
the broadcast side). Without buffering, the relay must make decisions
jointly in order to achieve optimal DMT.

\subsubsection{Limited Feedback}

We now assume the relay does not have perfect CSI but rather has
access to one bit of information per node from each destination and
is further able to send one bit of information per node to each of the
sources. We wish to explore the DMT of this network under
the one-bit feedback strategy.

Each destination node knows its incoming channel gain via the usual
channel estimation techniques. Each destination compares its incoming
channel gain to a threshold $\alpha$, reporting the result via the
one-bit feedback to the relay. The $k$ destination nodes that report
``1''(and their respective channels) are characterized as eligible for
data transmission in that interval. From among these $k$ eligible
destinations, the relay chooses the one whose corresponding
source-relay channel is the best.

The network is considered in outage if there is no
source-relay-destination link that can support the target rate $R$. We
design the threshold of the second hop of the network such that each
destination reports ``1'' if the corresponding relay-destination link
can support this rate $R$, i.e., $\alpha=\frac{\rho^{2r}-1}{\rho}$. The
outage event occurs if no destination reports positively, or if some
destinations are eligible, but none of the corresponding source-relay
links can support the rate $R$.  If according to this methodology the
relay detects more than one end-to-end path that can support the rate
$R$, the relay selects one of them randomly.

We define $A_m$ as the event of $m$ destinations reporting ``1'', and
$\prob(e|A_m)$ as the probability of error given that $m$ destinations
report ``1''. This is the probability that none of the $m$
eligible relay-destination channels have a corresponding source-relay
link that can support the rate $R$. The probability of outage in this case is
\begin{align}
\label{eq:out_lmt}
P_{\cal O}&=\prob(A_0)+\sum_{m=1}^M \prob(A_m) \prob(e|A_m).
\end{align}
The probability of $m$ destinations reporting ``1'' and $M-i$
destinations reporting ``0'' is
\begin{align}
\label{eq:Prb_ai}
\prob(A_m)&={M \choose m} F_{\gamma}(\alpha)^{m}\big(1-F_\gamma(\alpha)\big)^{M-m}\nonumber\\
&={M \choose m}\big(e^{-\lambda \alpha}\big)^m\big(1-e^{-\lambda \alpha}\big)^{M-m},
\end{align}
where $F_\gamma(x)$ is the cdf of the channel gains $\gamma=|h|^2$,
which is exponentially distributed with parameter $\lambda=1$. The
probability of error given that $m$ destinations report ``1'' is
\begin{align}
\label{eq:prb_eror}
\prob(e|A_m)&=\prob\big(\max_{j\in S} |h_{jr}|^2 \le
\alpha\big)\nonumber\\
&=\big(1-F_\gamma(\alpha)\big)^m=\big(1-e^{-\lambda \alpha}\big)^{m},
\end{align}
where $S\subset\{1,\ldots,M\}$, $|S|=m$, and we use the fact that
source-relay and relay-destination channel gains have the same
distribution $F_\gamma$. Assuming non-identical exponential
distributions introduces more variables into analysis but the end
results will be
identical. Substituting~(\ref{eq:Prb_ai}),~(\ref{eq:prb_eror})
in~(\ref{eq:out_lmt}), the outage probability becomes
\begin{align}
P_{\cal O}&=\big(1-e^{-\lambda \alpha}\big)^{M}
+\sum_{m=1}^M {M \choose m}\big(e^{-\lambda \alpha}\big)^m\big(1-e^{-\lambda \alpha}\big)^{M-m}\big(1-e^{-\lambda \alpha}\big)^{m}\nonumber\\
&=\sum_{m=0}^M {M \choose m}\big(e^{-\lambda_g \alpha}\big)^m\big(1-e^{-\lambda \alpha}\big)^{M-m}\big(1-e^{-\lambda \alpha}\big)^{m}.
\end{align}

To calculate the DMT, from~(\ref{eq:out_lmt}), the outage probability is
\begin{align}
P_{\cal O}&=\prob \bigg(\frac{1}{2}\log\big(1+\max_i|\hri |^2\rho\big)\le
r\log\rho\bigg)
+\sum_{m=1}^M {M\choose m}\prob \bigg(\frac{1}{2}\log
\big(1+|h_{rd}|^2\big)\le r\log\rho\bigg)^{M-m}
\nonumber\\& \qquad \times 
\prob \bigg(\frac{1}{2}\log\big(1+|h_{rd}|^2\big)\ge
r\log\rho\bigg)^{m}
\prob \bigg(\frac{1}{2}\log \big(1+\max_{j\in
S,|S|=m}|h_{jr}|^2\big)\le r\log\rho\bigg)\nonumber\\
&\doteq \prob \bigg( \max_i|\hri |^2 \le
\rho^{2r-1}\bigg)
+\sum_{m=1}^M \prob \big(|h_{rd}|^2 \le \rho^{2r-1} \big)^{M-m}
\prob \big(|h_{rd}|^2\ge \rho^{2r-1}\big)^m \nonumber\\
&\qquad \qquad \times \prob \bigg(\max_{j\in S, |S|=m} |h_{jr}|^2 \le
\rho^{2r-1}\bigg)\nonumber\\
&\doteq \sum_{m=1}^M \big(e^{-\lambda\rho^{2r-1}}\big)^m\big(1-e^{-\lambda
\rho^{2r-1}}\big)^{M-m}\big(1-e^{-\lambda \rho^{2r-1}}\big)^{m}\nonumber\\
&\doteq \rho^{M(2r-1)}.
\end{align}
So we have:
\begin{equation}
d(r)=M(1-2r)^+.
\end{equation}
Thus, even 1-bit feedback is enough to achieve optimal
DMT.
\section{conclusion}
\label{sec:conclusion}

The high-SNR performance of opportunistic relay networks are
investigated. Except for a handful of simple relay selection scenarios,
there are two main difficulties in the analysis of opportunistic relay
networks: (1) the decision variables often depend on more than one link
gain, complicating the performance analysis and (2) the opportunistic
modes may share links and thus are statistically dependent, which
complicates the order statistics that govern the performance of
opportunistic systems. In this work, several relaying geometries are
studied and the corresponding DMTs are developed for a number of
well-known relaying protocols, including the AF, DF, CF, NAF, and
DDF. In several instances, selection schemes based on the direct
source-destination links are shown to achieve optimal performance, for
example the CF multiple access channel. In some network geometries,
opportunistic selection using 1-bit feedback is shown to achieve the
optimal DMT performance.
Future work may include investigating variations in the channel state
knowledge in the performance of the system, for example the effect of
partial CSIT at the nodes~\cite{Kim2007}.

\appendices

\section{Opportunistic DF orthogonal relaying over a simple relay channel}
\label{Appen:ODF}
 The DMT of the opportunistic orthogonal relaying is given by
\begin{equation}
d(r)=d_1(r)+d_2(r),
\label{eq:DMT_O}
\end{equation}
where 
\begin{align}
d_1(r)&=\lim_{\rho\rightarrow\infty}\frac{\log\prob(e_1)}{\log\rho},
\label{eq:DMT_O1}\\
d_2(r)&=\lim_{\rho\rightarrow\infty}\frac{\log\prob(e_2|e_1)}{\log\rho}.
\label{eq:DMT_O2}
\end{align}
The events $e_1$ and $e_2$ represent the error in the non-relay and
the relay-assisted modes, respectively. The non-relay access mode is a
simple direct link, whose DMT is $d_1(r)=(1-r)^+$. The DMT of the
relay-assisted access mode is known, however, the DMT of the relay
channel {\em conditioned on} the outage event of the direct link
requires new calculations.

Recall that the orthogonal DF relaying works as follows: The
transmission interval is divided into two halves. In the first half, the
source transmits. If the relay cannot decode the source message, it will
remain silent and the source will continue to transmit into the second
half-interval. If the relay decodes the source message, the relay
forwards the decoded message to the destination in the second half of
the transmission interval and the source remains silent.
  
Because of orthogonality and with the use of long codewords, it is
trivial to see that error is dominated by outage. The conditional
outage probability of the relay-assisted mode is given by
\begin{align}
\prob({\cal O}_2|{\cal O}_1)&=\prob\bigg(\Big\{\frac{1}{2}\log(1+U\rho)<r\log \rho\Big\}\Big|\Big\{\log(1+|h_{sd}|^2\rho)<r\log\rho\Big\}\bigg)\\
&=\prob\bigg(\Big\{U<\frac{\rho^{2r}-1}{\rho}\Big\}\Big|\Big\{|h_{sd}|^2<\frac{\rho^{r}-1}{\rho}\Big\}\bigg),
\label{eq:out_ODF}
\end{align}
where the random variable $U$ is given by
\begin{equation}
  U=
  \begin{cases}
    2|h_{sd}|^2 & |h_{sr} |^2 <\frac{\rho^{2r}-1}{\rho}\\ 
     |h_{sd}|^2+|h_{rd} |^2 & |h_{sr} |^2\ge \frac{\rho^{2r}-1}{\rho}.
  \end{cases}
\end{equation}
The cdf of $U$ is given by
\begin{align}
F_U(u)&=\prob\Big(|h_{sd} |^2<\frac{u}{2}\Big)\prob\bigg(|h_{sr} |^2 <
\frac{\rho^{2r}-1}{\rho}\bigg)
+\prob\Big(|h_{sd} |^2+|h_{rd}|^2<u\Big)\prob\bigg(|h_{sr} |^2
\ge \frac{\rho^{2r}-1}{\rho}\bigg).\nonumber
\end{align}
Hence,
\begin{align}
\prob({\cal O}_2|{\cal O}_1)=&
\prob\bigg(\Big\{|h_{sd} |^2<\frac{1}{2}\,\frac{\rho^{2r}-1}{\rho}\Big\}\Big|\Big\{|h_{sd}|^2<\frac{\rho^{r}-1}{\rho}\Big\}\bigg)\prob\bigg(|h_{sr} |^2 <\frac{\rho^{2r}-1}{\rho}\bigg)\nonumber\\
&+\prob\bigg(\Big\{|h_{sd}
|^2+|h_{rd}|^2<\frac{\rho^{2r}-1}{\rho}\Big\}\Big|\Big\{|h_{sd}|^2<\frac{\rho^{r}-1}{\rho}\Big\}\bigg)\prob\bigg(|h_{sr}
|^2\ge \frac{\rho^{2r}-1}{\rho}\bigg)
\label{eq:ODF-conditional}
\end{align}
One can show that $\frac{1}{2}\,\frac{\rho^{2r}-1}{\rho}\dot > \frac{\rho^{r}-1}{\rho}$, therefore 
\begin{equation} 
\prob\Big(\Big\{|h_{sd} |^2<\frac{1}{2}\,\frac{\rho^{2r}-1}{\rho}\Big\}\Big|\Big\{|h_{sd}|^2<\frac{\rho^{r}-1}{\rho}\Big\}\Big)\doteq 1.
\label{eq:cond_exp}
\end{equation}

To analyze the second conditional term in
Equation~(\ref{eq:ODF-conditional}), we begin with the pdf of
$Z=|h_{sd} |^2+|h_{rd} |^2$ conditioned on the event
$B=\Big\{|h_{sd}|^2<\frac{\rho^{r}-1}{\rho}\Big\}$. The channel gain
$\gamma \defeq |h_{sd} |^2$ has the following conditional distribution
\begin{equation}
\label{eq:exp_cond2}
  f_{\gamma|B}(x)=\begin{cases} \frac{
    e^{-x}}{1-e^{- \frac{\rho^r-1}{\rho}}} &
  x\le\frac{\rho^{r}-1}{\rho}, \\ 0&x
  >\frac{\rho^{r}-1}{\rho}. \end{cases}
\end{equation}
Defining $g_1(r,\rho)\defeq\frac{\rho^r-1}{\rho}$ and
$g_2(r,\rho)\defeq\frac{\rho^{2r}-1}{\rho}$, the conditional pdf of
$Z=|h_{sd} |^2+|h_{rd} |^2$ is calculated as follows, for $z\le
g_1(r,\rho)$
\begin{align}
  f_{Z|B}(z)&=\int_0^z e^{- (z-x)}\frac{ e^{-
      x}}{1-e^{- g_1(r,\rho)}}dx\nonumber\\ &=\frac{z
    e^{- z}}{1-e^{- g_1(r,\rho)}}.
\end{align}
For $z>g_1(r,\rho)$, the conditional pdf of $Z=|h_{sd} |^2+|h_{rd} |^2$ is
given by
\begin{align}
  f_{Z|B}(z)&=\int_0^{g_1(r,\rho)}e^{- (z-x)}\frac{
    e^{-x}}{1-e^{-
      g_1(r,\rho)}}dx\nonumber\\ &=\frac{ g_1(r,\rho)
    e^{- z}}{1-e^{- g_1(r,\rho)}}.
\end{align}

The conditional probability of outage is
calculated as follows
\begin{align}
\prob\Big(\Big\{|h_{sd} |^2+|h_{rd}|^2<g_2(r,\rho)\Big\}\Big|&\Big\{|h_{sd}|^2<g_1(r,\rho)\Big\}\Big)\nonumber\\
&=\int_{0}^{g_1(r,\rho)} \frac{z
    e^{-z}}{1-e^{-g_1(r,\rho)}}dz+\int_{g_1(r,\rho)}^{g_2(r,\rho)}  \frac{g_1(r,\rho) e^{-z}}{1-e^{-g_1(r,\rho)}}dz\nonumber\\
&= \frac{1- e^{-g_1(r,\rho)}
-g_1(r,\rho)e^{-g_2(r,\rho)}}{1- e^{-g_1(r,\rho)}}\nonumber\\
&\doteq  1- \frac{\rho^{r-1} e^{-\rho^{2r-1}}} {1- e^{-\rho^{r-1}}}\doteq  \rho^{2r-1}.\label{eq:cond_gamma}
\end{align}
Substituting (\ref{eq:cond_exp}) and (\ref{eq:cond_gamma}) into (\ref{eq:out_ODF}), the conditional probability of outage is given by
\begin{align}
P({\cal O}_2|{\cal O}_1)\doteq&\rho^{(2r-1)}+\rho^{(2r-1)}(1-\rho^{(2r-1)})\nonumber\\
\doteq&\rho^{(2r-1)}.
\label{eq:out_cond_fn}
\end{align}
Using Equations (\ref{eq:DMT_O}), (\ref{eq:DMT_O1}), (\ref{eq:DMT_O2})
and (\ref{eq:out_cond_fn}), the DMT of the orthogonal opportunistic
DF relaying is given by
\begin{equation}
d(r)=(1-r)^++(1-2r)^+.
\end{equation} 

\section{Opportunistic AF orthogonal relaying over a Simple Relay Channel}
\label{Appen:OAF}

The outage probability of the relay-assisted mode, given that the
non-relay mode is in outage is given by
\begin{align}
  \prob({\cal O}_2|{\cal O}_1)&=\prob\bigg(\Big\{\frac{1}{2} \log \Big(1+|h_{sd}|^2\rho+ f\big(|h_{sr}|^2\rho,|h_{rd}|^2\rho\big)\Big)<r\log\rho\Big\}\Big|\Big\{\log(1+|h_{sd}|^2\rho)<r\log\rho\Big\}\bigg)\\
&=\prob\bigg(\Big\{|h_{sd}|^2+ \frac{1}{\rho}f\big(|h_{sr}|^2\rho,|h_{rd}|^2\rho\big)<\frac{\rho^{2r}-1}{\rho}\Big\}\Big|\Big\{|h_{sd}|^2<\frac{\rho^{r}-1}{\rho}\Big\}\bigg),\label{eq:out_con_AF}
\end{align}
At high SNR, Equation~(\ref{eq:out_con_AF}) can be approximated by
\begin{align}
  \prob({\cal O}_2|{\cal O}_1)&=\prob\bigg(\Big\{|h_{sd}|^2+ \frac{|h_{sr}|^2|h_{rd}|^2}{|h_{sr}|^2+|h_{rd}|^2}<\frac{\rho^{2r}-1}{\rho}\Big\}\Big|\Big\{|h_{sd}|^2<\frac{\rho^{r}-1}{\rho}\Big\}\bigg)
\end{align}
where $\frac{|h_{sr} |^2|h_{rd} |^2}{|h_{sr} |^2+|h_{rs}|^2}$
represents the harmonic mean of two independent exponential random
variables. Using the result of~\cite{Seddik2006}, the harmonic mean of
two exponential random variables with exponential parameters $\lambda$
can be approximated by an exponential random variable with exponential
parameter $2\lambda$.

In order to calculate the conditional outage probability distribution,
we first calculate the conditional density function of
$Z=|h_{sd}|^2+V$ where $V=\frac{|h_{sr} |^2|h_{rd} |^2}{|h_{sr}
  |^2+|h_{rs}|^2}$. Again, we are assuming
$g_1(r,\rho)=\frac{\rho^{r}-1}{\rho}$, $g_2(r,\rho)=\frac{\rho^{2r}-1}
{\rho}$, and conditioning is over the event $B=\Big\{h_{sd}
|<\frac{\rho^{r}-1}{\rho}\Big\}$. The conditional probability density
function of $Z=|h_{sd} |^2+V$ is given by
\begin{equation}
   f_{Z|B}(z)=
   \begin{cases}
     \frac{2e^{- 2z}(e^{z}-1)}{1-e^{- g_1(r,\rho)}}&z\le g_1(r,\rho)\\
     \frac{2e^{-2z}\big(e^{g_1(r,\rho)}-1\big)}{1-e^{- g_1(r,\rho)}}&z>g_1(r,\rho).
   \end{cases}
\end{equation}
The conditional probability of outage is calculated as follows
\begin{align}
\prob\Big(|h_{sd} |^2+|h_{rd}|^2<g_2(r,\rho)\Big|&|h_{sd}|^2<g_1(r,\rho)\Big)\nonumber\\
&=2\int_{0}^{g_1(r,\rho)} \frac{e^{- 2z}(e^{z}-1)}{1-e^{- g_1(r,\rho)}}dz+2\int_{g_1(r,\rho)}^{g_2(r,\rho)}  \frac{e^{-2z}\big(e^{g_1(r,\rho)}-1\big)}{1-e^{- g_1(r,\rho)}}dz\nonumber\\
&=\frac{e^{-2g_2(r,\rho)}-e^{-g_1(r,\rho)}-e^{-2g_2(r,\rho)+g_1(r,\rho)}+1}{1-e^{-g_1(r,\rho)}}\nonumber\\
&\doteq1+e^{-2\rho^{2r-1}}\frac{1-e^{\rho^{r-1}}}{1-e^{-\rho^{r-1}}}\doteq \rho^{2r-1}.
\label{eq:out_cond_fn2}
\end{align}
Using Equations (\ref{eq:DMT_O}), (\ref{eq:DMT_O1}), (\ref{eq:DMT_O2}),
and (\ref{eq:out_cond_fn2}), the DMT of the orthogonal opportunistic
AF relaying is given by
\begin{equation}
d(r)=(1-r)^++(1-2r)^+.
\end{equation}

\section{Genie-Aided DMT Upper Bound for the Shared Relay Channel}
\label{Appen:SRC2}

The indexing of the access modes does not affect the problem, therefore
we can order the conditional events in
Lemma~\ref{lemma:1},~\ref{lemma:2} arbitrarily. In the following, we
index the outage events according to the order of selection that is
described below, which is designed to sort out the dependencies in a way
to make computations tractable.

The selection algorithm is as follows: If the non-relayed configuration
(shown in Figure~\ref{fig:SRC3} part (c)) can support the required rate
$R=r\log\rho$, it is selected. We shall call this {\em Mode 1} in the
remainder of appendices. If {\em Mode 1} is in outage (an event denoted
by ${\cal U}_1$) we will check to see if either of the two direct links
can individually support half the rate, i.e.,
$R=\frac{r}{2}\log\rho$. If one of the direct links can support this
reduced rate, we consider the relayed mode sharing that direct link. (If
none of the direct links can even support half the rate, we can consider
either one at random.) This relayed mode shall be called {\em Mode 2}.
If {\em Mode 2} can support the full required rate, it is selected. The
outage of {\em Mode 2} is denoted ${\cal U}_2$. If both {\em Modes 1, 2}
are in outage, the remaining relayed mode, which will be denoted {\em
  Mode 3}, is selected. The outage of {\em Mode 3} is denoted ${\cal
  U}_3$ in this and the following appendices. The error events
corresponding to the three modes are denoted $e_1', e_2', e_3'$ in this
and subsequent appendices. 

The total DMT of the genie-aided system is
\begin{align}
\label{eq:dmtadd}
  d(r)&=d'_1(r)+d'_2(r)+d'_3(r),
\end{align}
where
\begin{align}
  d'_1(r)&=-\lim_{\rho\rightarrow\infty}\frac{\log
    \prob(e_1')}{\log\rho}\label{eq:DMT21},\\
 d'_2(r)&=-\lim_{\rho\rightarrow\infty}\frac{\log
    \prob(e_2'| e_1')}{\log\rho},\label{eq:dmteq01}\\
\label{eq:dmteq}
  d'_3(r)&=-\lim_{\rho\rightarrow\infty}\frac{\log \prob(e_3'| e_2'
    ,e_1')}{\log\rho}.
\end{align}

Although the expressions above are in terms of error events, in the
remainder of this appendix the diversities are expressed in terms of
outage events instead of error events due to the fact that the
genie-aided modes are equivalent to  MISO channels and the codewords are
assumed to be long enough.

{\em Mode 1}, access mode (c), represents a parallel Rayleigh
channel. The outage of a parallel Rayleigh channel, $\prob({\cal O}_3
)$, is given by
\begin{align}
  \prob(&{\cal{O}}_3)= \prob({\cal O}_{31})+\prob({\cal O}_{32})+\prob({\cal O}_{33}),
\end{align}
where ${\cal O}_{31}, {\cal O}_{32}$ and ${\cal O}_{33}$ partition the
outage event ${\cal O}_3$ according to whether the first, the second, or
both direct links are in outage.
\begin{align}  \prob({\cal O}_{31})&=\prob \Big(\log(1+|h_{11}|^2\rho)<\frac{r}{2}\log\rho\Big)\prob \Big(\log(1+|h_{22}|^2)\rho\ge \frac{r}{2}\log\rho\Big)\label{eq:O31}\\
\prob({\cal O}_{32})&=\prob \Big(\log(1+|h_{11}|^2\rho)\ge\frac{r}{2}\log\rho\Big)\prob \Big(\log(1+|h_{22}|^2)\rho< \frac{r}{2}\log\rho\Big)\label{eq:O32}\\
\prob({\cal O}_{33})&=\prob \Big(\log(1+|h_{11}|^2\rho)<\frac{r}{2}\log\rho\Big)\prob \Big(\log(1+|h_{22}|^2)\rho< \frac{r}{2}\log\rho\Big).\label{eq:O33}
\end{align}
Therefore, in the asymptote of high SNR:
\begin{align}
\label{eq:out3SRC}
  P({\cal{O}}_3)
&\doteq  \rho^{{r/2}-1} + \rho^{{r/2}-1}+\rho^{r-2}\doteq  \rho^{{r/2}-1}
\end{align}
The unconditional DMT of the non-relayed mode 
\begin{equation}
  d'_1(r)=(1-\frac{r}{2})^+.
\label{eq:dmt3rd}
\end{equation}

 To calculate $d'_2(r)$ and $d'_3(r)$, we study the outage of the
respective access modes. We start by calculating the conditional
outage of Mode 3 and use the result to calculate the conditional
outage for Mode 2.
\begin{align}
\prob({\cal U}_3|{\cal U}_2,{\cal U}_1)
&=\prob \bigg(\Big\{\log\big(1+(|\hii |^2+|\hri |^2)\rho\big)<r\log\rho\Big\}\; \Big|\; \nonumber\\
&\qquad \Big\{|h_{jj}|^2<f_2^{-1}(R,|h_{r,d_j}|^2)\Big\}\; , \;\Big\{|h_{11}|^2<\frac{\rho^{r/2}-1}{\rho}\Big\},\Big\{
|h_{22}|^2<\frac{\rho^{r/2}-1}{\rho} \Big\} \bigg)\nonumber\\
&\doteq \prob \bigg(\Big\{|\hii |^2+|\hri |^2<\frac{\rho^r-1}{\rho}\Big\}\;\Big|\; \Big\{|\hii |^2
<\frac{\rho^{r/2}-1}{\rho}\Big\}\bigg),
\label{eq:condout}
\end{align}
where $i$ is the index of the source selected in Mode 3 and $j$ is the
index of the source selected in Mode 2. The channel gain $\gamma_{ii}
\defeq |\hii |^2$, conditioned on the event $B=\Big\{ |\hii
|<\frac{\rho^{r/2}-1}{\rho}\Big\}$ has the following conditional
distribution
\begin{equation}
\label{eq:exp_cond}
  f_{\gamma_{ii}|B}(x )=\begin{cases} \frac{
    e^{- x}}{1-e^{- \frac{\rho^{r/2}-1}{\rho}}} &
  x\le\frac{\rho^{r/2}-1}{\rho} \\ 0&x
  >\frac{\rho^{r/2}-1}{\rho} \end{cases}
\end{equation}

Defining $g_1(r,\rho)\defeq\frac{\rho^{r/2}-1}{\rho}$ and
$g_2(r,\rho)\defeq\frac{\rho^{r}-1}{\rho}$, the conditional probability
density function of $Z=|\hii |^2+|\hri |^2$ is 
\begin{equation}
  f_{Z|B}(z)=
  \begin{cases}
    \frac{z e^{-\lambda z}}{1-e^{- g_1(r,\rho)}}&z\le g_1(r,\rho)\\
    \frac{ g_1(r,\rho) e^{- z}}{1-e^{- g_1(r,\rho)}}&z>g_1(r,\rho).
  \end{cases}
\end{equation}
The probability of outage $\prob({\cal U}_3|{\cal U}_2,{\cal U}_1)$ can be
calculated as follows
\begin{align}
\prob({\cal U}_3|{\cal U}_2,{\cal U}_1)
&=\int_{0}^{g_1(r,\rho)} \frac{z
    e^{-z}}{1-e^{-g_1(r,\rho)}}dz+\int_{g_1(r,\rho)}^{g_2(r,\rho)}  \frac{g_1(r,\rho) e^{-z}}{1-e^{-g_1(r,\rho)}}dz\nonumber\\
&= \frac{1- e^{-g_1(r,\rho)}
-g_1(r,\rho)e^{-g_2(r,\rho)}}{1- e^{-g_1(r,\rho)}}\nonumber\\
&\doteq  1- \frac{\rho^{r/2-1} e^{-\rho^{r-1}}} {1- e^{-\rho^{r/2-1}}}\doteq  \rho^{r-1}.\label{eq:3rddmt}
\end{align}

To facilitate the analysis of the conditional outage of Mode 2, we
introduce a partition of ${\cal U}_1$. Define ${\cal V}$ as the event that {\em at least} one of the direct links can support half the
desired rate, i.e.  $\frac{r}{2}\log\rho$, and introduce:
\begin{equation}
{\cal V}_1 = {\cal V} \cap {\cal U}_1 \qquad\qquad {\cal V}_2 = \bar{\cal V} \cap {\cal U}_1
\end{equation}
Thus, ${\cal V}_1$ is the event that the non-relayed {\em Mode 1} is
in outage, and yet at least one of the two direct links can support at
least half the desired rate, i.e., $\frac{r}{2}\log\rho$.
\begin{align}
 \prob({\cal U}_2|{\cal U}_1)&=\frac{\prob({\cal U}_2,{\cal U}_1)}{\prob({\cal U}_1)}\nonumber\\
&=\frac{\prob({\cal U}_2|{\cal V}_1)\prob({\cal V}_1)+\prob({\cal U}_2|{\cal V}_2)\prob({\cal V}_2)}{\prob({\cal V}_1)+\prob({\cal V}_2)}\nonumber\\
&\doteq \frac{\rho^{2(r-1)}2\rho^{(\frac{r}{2}-1)}+\rho^{(r-1)}\rho^{2(\frac{r}{2}-1)}}{2\rho^{(\frac{r}{2}-1)}+\rho^{2(\frac{r}{2}-1)}}\nonumber\\
&\doteq \rho^{2(r-1)},\label{eq:2rddmt}
\end{align}
where $\prob({\cal V}_1)=\prob({\cal O}_{31})+\prob({\cal
  O}_{32})\doteq2\rho^{(r/2-1)}$ from Equation~(\ref{eq:O31})
and~(\ref{eq:O32}), $\prob({\cal V}_2)=\prob({\cal O}_{33})\doteq
\rho^{2(r/2-1)}$ from Equation~(\ref{eq:O33}). The probability of
$U_2$ conditioned on ${\cal V}_2$ is equivalent to
Equation~(\ref{eq:condout}) and hence $\prob({\cal U}_2|{\cal
  V}_2)\doteq \rho^{(r-1)}$. The conditional probability $\prob({\cal
  U}_2|{\cal V}_1)$ is given by
\begin{align}
  \prob({\cal U}_2|{\cal V}_1)=&\prob \bigg(\Big\{|\hii
  |^2+|\hri |^2<\frac{\rho^r-1}{\rho}\Big\}\;\Big|\; \Big\{|\hii |^2
  >\frac{\rho^{r/2}-1}{\rho}\Big\}\bigg).\label{eq:U2V1_1}
\end{align}
We notice that 
\begin{align}
  \prob \bigg(\Big\{|\hii
  |^2+&|\hri |^2<\frac{\rho^r-1}{\rho}\Big\}\bigg)\nonumber\\
=&\prob \bigg(\Big\{|\hii
  |^2+|\hri |^2<\frac{\rho^r-1}{\rho}\Big\}\;\Big|\; \Big\{|\hii |^2
  <\frac{\rho^{r/2}-1}{\rho}\Big\}\bigg)\prob\bigg(\Big\{|\hii |^2
  <\frac{\rho^{r/2}-1}{\rho}\Big\}\bigg)\nonumber\\
&+ \prob\bigg(\Big\{|\hii |^2+|\hri
|^2<\frac{\rho^r-1}{\rho}\Big\}\;\Big|\; \Big\{|\hii |^2
>\frac{\rho^{r/2}-1}{\rho}\Big\}\bigg)\prob\bigg(\Big\{|\hii |^2
>\frac{\rho^{r/2}-1}{\rho}\Big\}\bigg).\label{eq:U2V1_2}
\end{align}
At high SNR, using result from Equation~(\ref{eq:3rddmt}), Equation~(\ref{eq:U2V1_2}) leads to
\begin{align}
  \rho^{2(r-1)}\doteq\rho^{(r-1)}\rho^{(r/2-1)}+\prob\bigg(\Big\{|\hii |^2+|\hri
|^2<\frac{\rho^r-1}{\rho}\Big\}\;\Big|\; \Big\{|\hii |^2
>\frac{\rho^{r/2}-1}{\rho}\Big\}\bigg),\label{eq:U2V1_3}
\end{align}
where the random variable $|\hii |^2+|\hri |^2$ has Gamma
distribution. Using Equation~(\ref{eq:U2V1_1}) and~(\ref{eq:U2V1_3}),
one can see that $\prob({\cal U}_2|{\cal V}_1)=\rho^{2(r-1)}$.

  Equations~(\ref{eq:2rddmt}) and~(\ref{eq:3rddmt}) indicate that
\begin{align}
  d'_2(r)&=2(1-r)^+,\\ d'_3(r)&=(1-r)^+.
\end{align}
The DMT of the genie aided system is given by
\begin{align}
  d(r)&=(1-\frac{r}{2})^++2(1-r)^++(1-r)^+\\
&=
\begin{cases}
  4-\frac{7}{2}r, & 0\le r \le 1\\ (1-\frac{r}{2}), & 1 < r \le 2.
\end{cases}
\end{align}


\section{NAF Achievable DMT for the Shared Relay Channel}
\label{Appen:SRC_NAF}

The DMT of the NAF protocol for the shared relay channel will be
calculated according to the selection algorithm developed in
Appendix~\ref{Appen:SRC2}, which we invite the reader to review before
continuing with the present appendix. 

The overall diversity is governed by Equation~(\ref{eq:dmtadd}), and we
need to calculate $d_1'(r), d_2'(r), d_3'(r)$.

To begin with, the DMT of the non-relayed mode does not depend on the
relaying protocol, so there is no need to calculate it again: it is
$d'_1(r)= (1-\frac{r}{2})^+$ as calculated in
expression~(\ref{eq:dmt3rd}).

For calculating $d_3'(r)$, the equivalence of error and outage
analysis is nontrivial and will be relegated to
Appendix~\ref{Appen:NAF}. In this appendix we analyze the conditional
outage of {\em Mode 3}:
\begin{align}
\prob ({\cal U}_3|{\cal U}_2,{\cal U}_1 ) &\doteq \frac{1}{2} \prob\Big( I_1 <
R \;|\;\bar{\cal V},  I_2 < R \Big) + \frac{1}{2} \prob\Big( I_2 <
R \;|\;\bar{\cal V},  I_1 < R \Big)\nonumber\\
&\doteq \prob \Big ( I_1 < R \; \Big| \;
\Big\{ |h_{11}|^2<\frac{\rho^{r/2}-1}{\rho}\Big\}\Big ),
\label{eq:ModeThreeOutage}
\end{align}
where $I_1$ and $I_2$ are the instantaneous mutual information of the
simple relay channel for User 1 and User 2, respectively. The symmetry
arguments has been used to simplify the expression. We will use the
exponential order of channel gains, defined thus
\begin{equation}
\label{eq:exp_ord}
  v=-\lim_{\rho\rightarrow\infty}\frac{\log |h|^2}{\log\rho},
\end{equation}
where $v$ itself is a random variable. Recall that the conditional pdf
of the source-destination channel gain $|h_{11} |^2$, subject to $h_{11}$
not supporting rate $\frac{r}{2}\log \rho$, is given by
Equation~(\ref{eq:exp_cond}). The exponential order of this {\em
  conditional} random variable is denoted $v_1$ whose pdf can be
calculated as follows
\begin{equation}
  f(v_1)=
  \begin{cases}
    \ln\rho\;\rho^{-v_1}\frac{e^{-\rho^{-v_1}}}{1-e^{-\frac{\rho^{r/2}-1}{\rho}}}
    & v_1\ge1-\frac{r}{2},\\ 0 & v_1<1-\frac{r}{2}.
  \end{cases}
\end{equation}
As $\rho\rightarrow\infty$ we can show that
\begin{equation}
\label{eq:exp_ord_con}
  f(v_1)\doteq
  \begin{cases}
    \rho^{-v_1-(r/2-1)}& v_1\ge 1-\frac{r}{2},\\ 0 & v_1<1-\frac{r}{2}.
  \end{cases}
\end{equation}

Also, the channel gains $|h_{r1} |^2$ and $|h_{1r} |^2$ (exponentially
distributed, unconditioned) have exponential orders that are denoted
$v_2$ and $v_3$, respectively. Furthermore, the pdf of $v_1,v_2,v_3$ are
in turn characterized by their asymptotic exponential orders
$f(v_i)\doteq \rho^{-u_i}$, over their respective regions of support.

In a manner similar to~\cite{Azarian2005}, the outage region is more
conveniently addressed in the space of the exponential orders, i.e.
\begin{equation}
  O=\{(v_1,v_2,v_3) \; : \; I<r\log \rho\},
\end{equation}
We can now calculate:
\begin {align}
\label{eq:outgen}
\prob(I<r\log\rho)
&=\iiint_O f(v_1,v_2,v_3)\, dv_1\, dv_2\, dv_3\nonumber\\
&=\iiint_{O'}\log\rho\;\rho^{-v_1}\frac{e^{-\rho^{-v_1}}}{1-e^{-\frac{\rho^{r/2}-1}{\rho}}} \log\rho\;\rho^{-v_2}e^{-\rho^{-v_2}} \nonumber\\
&\qquad \times \log\rho\;\rho^{-v_3}e^{-\rho^{-v_3}} dv_1\, dv_2\, dv_3\nonumber\\
&\doteq \iiint_{O'}\rho^{-\sum u_i}dv_1\, dv_2\, dv_3\nonumber\\
&\doteq \rho^{-d_o},
\end{align}
where $O'$ is the intersection of $O$ and the support of
$f(v_1,v_2,v_3)$, and
\begin{align}
d_o&= {\;\underset{(v_1,v_2,v_3)\in
    O'}{\inf}\quad\sum^n_{j=1}u_i},\nonumber\\
&={\underset{(v_1,v_2,v_3)\in O'}
    {\inf}v_1+(r/2-1)+v_2+v_3}
\label{eq:AFout}
\end{align}
Following the same steps as those used in the proof of~\cite[Theorem
  2]{Azarian2005}, 
\begin{align}
\label{eq:AForgn}
    O'&=\Big\{(v_1,v_2,v_3)\in R^{3+},v_1\ge
    \big(1-\frac{r}{2}\big)\; , \; \Big[\max\Big((1-v_1),\frac{1}{2}(1-(v_2+v_3)\Big)\Big]^+<r\Big\}
    \end{align}
Solving~(\ref{eq:AFout}), we can show that 
\begin{equation}
 d_0= (1-2r)^+
\end{equation}
It remains to show that $d_3'(r)=d_0$, which will be done in
Appendix~\ref{Appen:NAF}.

For calculating $d_2'(r)$, we follow steps essentially similar to
those leading to Equation~(\ref{eq:2rddmt}), except this time we need
to make explicit the relationship between outage and error events.
\begin{align}
 \prob(e_2'|e_1') &\doteq \prob(e_2'|{\cal U}_1)
\label{eq:equiv1}\\
&=\frac{\prob(e_2',{\cal U}_1)}{\prob({\cal U}_1)}\nonumber\\
&=\frac{\prob(e_2'|{\cal V}_1)\prob({\cal V}_1)+\prob(e_2'|{\cal V}_2)\prob({\cal V}_2)}{\prob({\cal V}_1)+\prob({\cal V}_2)}\nonumber\\
&\doteq \frac{\rho^{(r-1)}\rho^{(2r-1)}2\rho^{(\frac{r}{2}-1)}+\rho^{(2r-1)}\rho^{2(\frac{r}{2}-1)}}{2\rho^{(\frac{r}{2}-1)}+\rho^{2(\frac{r}{2}-1)}}\label{eq:dominant}\\
&\doteq \rho^{-(1-r)^+-(1-2r)^+}.
\end{align}
where~(\ref{eq:equiv1}) is true because $e_1'$ is the error of a
non-relayed link therefore, with long codewords, it is exponentially
equivalent to the outage event ${\cal U}_1$.
Equation~(\ref{eq:dominant}) is derived by substituting the known
error exponents and noting that the third term is dominated by the
first two in both the numerator and denominator. Overall,
$d_2'(r)=(1-r)^++(1-2r)^+$ can be obtained.

To summarize, we have calculated $d_1'(r), d_2'(r)$ and $d_3(r)$.

\section{Relation of Outage and Error Events for the Shared Relay Channel}
\label{Appen:NAF}

In this appendix, we show that the outage and error events have the same
exponential order. The approach follows~\cite[Theorem 3]{Azarian2005}
and is adapted to the specific case at hand. We need to show $\prob(e)
\dot\le \prob({\cal O})$ and $\prob(e) \dot\ge \prob({\cal O})$. The
former is a straightforward application of~\cite[Lemma
  5]{Zheng2003}. For showing the latter inequality, note that
\begin{align}
\prob(e)&=\prob({\cal O})\prob(e|{\cal O})+\prob(e,\bar{\cal
  O})\nonumber\\ &\le\prob({\cal O})+\prob(e,\bar{\cal
  O})\nonumber\\ &\doteq \prob({\cal O})
\end{align}
where the last equation is valid whenever $\prob(e,\bar{\cal O}) \dot\le
\prob({\cal O})$, whose verification is the subject of the remainder of
this appendix.
The pairwise error probability conditioned on the channel coefficients
is given by
\begin{equation}
  P_{{\bf c}\rightarrow{\bf e}|h_{sd},h_{sr},h_{rd}}\le \det\bigg( {\bf I}_2+ \frac{1}{2}{\boldsymbol\Sigma}_s
  \boldsymbol\Sigma_n^{-1}\bigg)^{-\ell/2}
\end{equation}
where $\ell$ is the codebook codeword length and $\boldsymbol\Sigma_s$ and
$\boldsymbol\Sigma_n$ are the covariance matrices of the received signal and the
noise, respectively. The pair wise error probability is given by
\begin{equation}
  P_{{\bf c}\rightarrow{\bf e}|v_1,v_2,v_3}\dot\le\ \rho^{-\frac{l}{2}\max(2(1-v_1),1-(v_2+v_3))^+},
\end{equation}
where 
\begin{equation}
(v_1,v_2,v_3)\in \; R^{3+} \cap \Big\{v_1\ge\Big(1-\frac{r}{2}\Big)\Big\}.
\end{equation}
The total probability of error is
\begin{equation}
  P_{e|v_1,v_2,v_3}\dot\le
  \rho^{-\frac{l}{2}([\max(2(1-v_1),1-(v_2+v_3))]^+-2r)} ,
\end{equation}
The probability of error while no outage $\prob(e,\bar{\cal O})$ satisfies
\begin{align}
  \prob(e,\bar{\cal O}) & \dot\le \iiint_{O''}P_{e|v_1,v_2,v_3}\prob((v_1,v_2,v_3)\in\bar{\cal O})\, dv_1 \, dv_2 \, dv_3\nonumber\\
&= \iiint_{O''} 
\rho^{-\frac{l}{2}([\max(2(1-v_1),1-(v_2+v_3))]^+-2r)+v_1+(\frac{r}{2}-1)+v_2+v_3} \;
dv_1 \, dv_2 \, dv_3.
\end{align}
where $O'' = \{ (v_1,v_2,v_3) \in R^+ : (v_1,v_2,v_3) \notin O'\}$, the
area in the positive quadrant that is the complement of $O'$. Recall
that $O'$ is the outage region in the space of exponents, as defined
in~(\ref{eq:AForgn}). The integral is dominated by the minimum value of
the SNR exponent over $\bar{\cal O}$, i.e,
\begin{equation}
  \prob(e,\bar{\cal O})\dot\le\rho^{-d_1(r)},
\end{equation}
where
\begin{align}
  d_1(r)=\underset {v_1,v_2,v_3 \in O''}{\inf}&\frac{\ell}{2}\Big(\big[\max(2(1-v_1),1-(v_2+v_3))\big]^+
\; -2r\Big)+v_1+(r/2-1)+v_2+v_3.
\end{align}
Note that the multiplier of $\ell$ is positive throughout the region
$O''$. Now recall from the previous appendix that the outage probability is:
\begin{equation}
  \prob({\cal O}) \doteq \rho^{-d_0(r)},
\end{equation}
where
\begin{equation}
  d_0(r)=\underset{(v_1,v_2,v_3)\in O'}
    {\inf}v_1+(r/2-1)+v_2+v_3
\end{equation}
The expression for $d_1(r)$ has one extra term compared with $d_0(r)$
which, as mentioned above, is positive and can be made as large as
desired by choosing $\ell$ to be large enough. Therefore the condition
$\prob(e,\bar{\cal O}) \dot\le \prob({\cal O})$ is established,
leading to $\prob(e) \dot\le \prob({\cal O})$, which completes
the proof that the probability of error and outage events are
exponentially equivalent.


\section{DMT for DDF Opportunistic Shared Relay Channel}
\label{Appen:DDF}

We derive an achievable DMT for the DDF opportunistic shared relay
channel, employing the mode selection rule defined in
Appendix~\ref{Appen:SRC2}. The DMT is given by
Equations~(\ref{eq:dmtadd}), (\ref{eq:DMT21}), (\ref{eq:dmteq01})
and~(\ref{eq:dmteq}). The reader is referred to
Appendix~\ref{Appen:SRC2} for the definition of the access modes as well
as the selection rule.

The DMT for {\em Mode 1} is not affected by the relay and is given by
$d_1'(r)=(1-r/2)^+$, as seen in previous appendices. For {\em Mode 2}
one can employ the techniques of Appendix~\ref{Appen:SRC2} to show that
outage is dominated by the event of one link being in outage, hence
using results from~\cite{Azarian2005}, one can prove that
$$d_2'(r)=\
\begin{cases}
  2(1-r) & 0\le r\le \frac{1}{2}\\
  \frac{1-r}{r} & \frac{1}{2}\le 1
\end{cases}$$

To calculate $d_3'(r)$, we consider the conditional outage of {\em Mode
  3}; the equivalence of error and outage analysis can be shown in a
manner similar to Appendix~\ref{Appen:NAF} and~\cite{Azarian2005}, and
is omitted for brevity. In the following we directly derive diversity
from the outage events. The conditional outage of {\em Mode 3} was
calculated in Equation~(\ref{eq:ModeThreeOutage}):
\[
\prob ({\cal U}_3|{\cal U}_2,{\cal U}_1 ) \doteq \prob \Big ( I_1 < R \; \Big| \;
\Big\{ |h_{11}|^2<\frac{\rho^{r/2}-1}{\rho}\Big\}\Big )
\]
Given that
$|h_{11}|^2<\frac{\rho^{r/2}-1}{\rho}$, the exponential order of
$|h_{11}|^2$ is proved in~(\ref{eq:exp_ord_con}) to have the following
distribution at high SNR
\begin{equation}
  f(v_i)\doteq
  \begin{cases}
    \rho^{-v_i-(r/2-1)}& v_i\ge 1-\frac{r}{2},\\ 0 & v_i<1-\frac{r}{2}.
  \end{cases}
\end{equation}
The outage as shown in Equation~(\ref{eq:outgen}) is given by
\begin{equation}
\label{eq:outddf}
  \prob({\cal U}_3|{\cal U}_2,{\cal U}_1)\doteq \rho^{-d_3'(r)},
\end{equation}
where
\begin{align}
d_3'(r)={\underset{(v_1,v_2,v_3)\in O'} {\inf}v_1+(r/2-1)+v_2+v_3}.
\end{align}
Following the same steps as the proof of~\cite[Theorem 5]{Azarian2005},
the outage event $O'$ is defined as
\begin{align}
\label{eq:out_reg}
  O'&=\Big\{(v_1,v_2,v_3)\in R^{3+} \; ,\;v_1\ge(1-r/2)\; ,\;  t(1-v_1)^++
  (1-t)\big(1-\min(v_1,v_2)\big)^+\le r\Big\},
\end{align}
where $t$ is the listening-time ratio of the half-duplex relay, with $r\le
t\le 1$.

To get the DMT, we need to solve the optimization problem
of~(\ref{eq:outddf}),~(\ref{eq:out_reg}). 
 Solving the above optimizations and combining the results, the DMT is
given by
\begin{equation}
  d_3'(r)=
  \begin{cases}
    1-\frac{r}{1-r}\big(1-\frac{r}{2}\big),&0\le r \le 0.5\\
    \frac{(1-r)}{r}-\big(1-\frac{r}{2}\big),& 0.5<r\le 2-\sqrt2\\
    0, &2-\sqrt2 < r \le 1.
  \end{cases}
\end{equation}
Adding $d_1'(r)$, $d_2'(r)$ and $d_3'(r)$ completes the proof.


\section{DMT for CF Opportunistic Shared Relay Channel}
\label{Appen:CF}

The methods of this appendix closely follow~\cite{Yuksel2007}, with the
notable exception of implementing the effects of our selection algorithm
and the dependence between the nodes.

We use the selection criterion defined in Appendix~\ref{Appen:SRC2}, and
the DMT is given by Equations~(\ref{eq:dmtadd}), (\ref{eq:DMT21}),
(\ref{eq:dmteq01}) and (\ref{eq:dmteq}).  The DMT of non-relayed {\em
  Mode 3} is given by $d_1'(r)=(1-r/2)^+$, as seen several times
already, since it is not contingent on the relay protocol.

To calculate $d_2'(r)$ and $d_3'(r)$, we borrow the following result
from~\cite{Yuksel2007}. For the random half-duplex single-antenna
relay channel, the dynamic-state CF protocol is DMT optimal and by
random here we mean that the random binary state of the relay
(listen/transmit) is used as a channel input and used in designing
codebooks to convey information through the state of the relay.

For {\em Mode 2}, one can employ the techniques of
Appendix~\ref{Appen:SRC2} to show that outage is dominated by the
event of one link being in outage, hence using results
from~\cite{Yuksel2007}, one can prove that
\[
d_2'(r)=2(1-r)^+
\]

For {\em Mode 3}, the DMT is given by
\begin{align}
  d_3'(r)&=\max_t\min( d_{MAC}(r,t),d_{BC}(r,t)),
\end{align}
where
\begin{align}
   d_{BC}&=-\lim_{\rho\rightarrow
     \infty}\frac{\log\min_{p(x_s,x_r|q)}
     \prob\big(I_{BC}<r\log\rho|{\cal U}_2,{\cal U}_1\big)}{\log\rho},\nonumber\\
   d_{MAC}&=-\lim_{\rho\rightarrow\infty}\frac{\log\min_{p(x_s,x_r|q)}
     \prob\big(I_{MAC}<r\log\rho|{\cal U}_2,{\cal U}_1\big)}{\log\rho}\nonumber,
\end{align}
where $q$ represents the state of the relay (listening
vs. transmitting), $p(x_s,x_r|q)$ is the probability density of the
codebooks generated for the source and the relay, and $I_{BC}$ and
$I_{MAC}$ represent the total mutual information across the source
cutset and the destination cutset, respectively. It can be
shown~\cite{Yuksel2007} that
\begin{align}
I_{BC}\le&(1-t)\log(1+(|h_{s^*d^*}|^2+|h_{s^*r}|^2)\rho)+t\log(1+|h_{s^*d^*}|^2\rho)\nonumber\\
I_{MAC}\le&(1-t)\log(1+|h_{s^*d^*}|^2\rho)+t\log(1+(|h_{s^*d^*}|^2+|h_{rd^*}|^2)\rho)\nonumber
\end{align}
where $s^*$ and $d^*$ are the selected source and destination for Mode
3. Using the same technique as in Appendix~\ref{Appen:DDF}, we have
\begin{align}
  \prob\big(I_{BC}<r\log\rho|{\cal U}_2,{\cal U}_1\big)\doteq\rho^{-d_{BC}(r)},
\end{align}
where
\begin{equation}
  d_{BC}(r)={\underset{(v_1,v_3)\in O'}
    {\inf}v_1+(r/2-1)+v_3},
\end{equation}
and the outage event $O'$ is defined as
\begin{align}
  O'&=\Big\{(v_1,v_3)\in R^{2+},v_1\ge(1-r/2)\, , \,
(1-t)(1-v_1)^++
  t\big(1-\min(v_1,v_3)\big)^+\le r\Big\}.
\end{align}
Solving the optimization problem, the DMT for the source cutset
is given by
\begin{align}
  d_{BC}=
\begin{cases}
    1-\frac{r}{t}\Big( 1-\frac{1-t}{2}\Big)&t> \frac{1}{2},\;
    r\le\frac{1-(1-t)}{1-(1-t)/2}\\
 0 &t> \frac{1}{2},\; r>\frac{1-(1-t)}{1-(1-t)/2}\\
1-r\Big(\frac{1}{t}-\frac{1}{2}\Big)& t\le \frac{1}{2},\; r\le t\\
 \frac{1-r}{1-t}+\frac{r}{2}-1 & t\le
 \frac{1}{2},\;\frac{1-(1-t)}{1-(1-t)/2}\ge r>t\\
0&t\le \frac{1}{2},\; r>\frac{1-(1-t)}{1-(1-t)/2}.
  \end{cases}
\end{align}
  
Similarly, The DMT for the destination cutset is given by
\begin{align}
  d_{MAC}=
\begin{cases}
     1-\frac{r}{1-t}\Big( 1-\frac{t}{2}\Big)&t< \frac{1}{2},\;
     r\le\frac{1-t}{1-t/2}\\
     0 &t< \frac{1}{2},\;r>\frac{1-t}{1-t/2}\\
     1-r\Big(\frac{1}{1-t}-\frac{1}{2}\Big)& t\ge \frac{1}{2},\;
 r\le(1-t)\\ 
     \frac{1-r}{t}+\frac{r}{2}-1 & t\ge
     \frac{1}{2},\;\frac{1-t}{1-t/2}\ge r>(1-t)\\
     0&t\ge \frac{1}{2},\;r>\frac{1-t}{1-t/2}.
  \end{cases}
\end{align}
  
The two functions are equal at $t=\frac{1}{2}$ and that
gives the maximum DMT. The DMT is given by
\begin{align}
  d_3'(r)=\Big(1-\frac{3}{2}r\Big)^+.
\end{align}
Adding the DMT of the three modes completes the proof.

%

 \bibliographystyle{IEEEtran} \bibliography{IEEEabrv,mohamed}

\end{document}